\newcommand{\RR}{\ensuremath{\mathbb{R}}}
\newtheorem{lemma}{Lemma}
\newtheorem{corollary}{Corollary}
\newtheorem{theorem}{Theorem}
\newtheorem{definition}{Definition}
\newtheorem{observation}{Observation}
\newtheorem{problem}{Problem}
\newcommand{\revision}[1]{#1}
\author[Dobler, Kobourov, Mondal, and Nöllenburg]{Alexander Dobler\affiliationmark{1}\thanks{Supported by the Vienna Science and Technology Fund (WWTF) under grant [10.47379/ICT19035].}
  \and Stephen Kobourov\affiliationmark{2}\\
  \and Debajyoti Mondal\affiliationmark{3}\thanks{Supported by the Natural Sciences and Engineering Research Council of Canada (NSERC).}
  \and Martin Nöllenburg\affiliationmark{1}\footnotemark[1] %
  }
\title[Representing Hypergraphs by Point-Line Incidences]{Representing Hypergraphs by Point-Line Incidences}
\affiliation{
  Algorithms and Complexity Group, TU Wien, Vienna, Austria\\
  Department of Computer Science, Technical University of Munich, Munich, Germany\\
  Department of Computer Science, University of Saskatchewan, Saskatoon, Canada}
\keywords{Hypergraph visualization, point-line incidences, $\exists \mathbb{R}$-hardness}
\begin{document}
\publicationdata{vol. 28:3}{2026}{3}{10.46298/dmtcs.15876}{2025-06-16; 2025-06-16; 2026-06-09}{2026-06-25}
\maketitle
\begin{abstract}
    We consider hypergraph visualizations that represent vertices as points in the plane and hyperedges as curves passing through the points of their incident vertices. Specifically, we consider several different variants of this problem by (a) restricting the curves to be lines or line segments, (b) allowing two curves to cross if they do not share an element, or not; and (c) allowing two curves to overlap, or not.
    We show $\exists\RR$-hardness for six of the eight resulting decision problem variants
    and describe polynomial-time 
    algorithms in some restricted settings.
    Lastly, we briefly touch on what happens if we allow the lines of the represented hyperedges to have bends---to this we generalize a counterexample to a long-standing result that was sometimes assumed to be correct.
\end{abstract}

\section{Introduction}
Hypergraphs, or equivalently set systems,  arise in many domains and visualizing them is a non-trivial challenge. Classical approaches, such as Venn and Euler diagrams~\citep{jp-hpcdvd-87,amahmr-sv-16,m-dh-90} do not scale to large instances. Recent experimental work~\citep{wallinger2021readability} has shown that representing hypergraphs as collections of polylines (for the hyperedges) and common intersection points (for the vertices) allows for faster and more accurate performance of hypergraph-related tasks.
\begin{figure}
    \centering
    \includegraphics[width=\textwidth]{./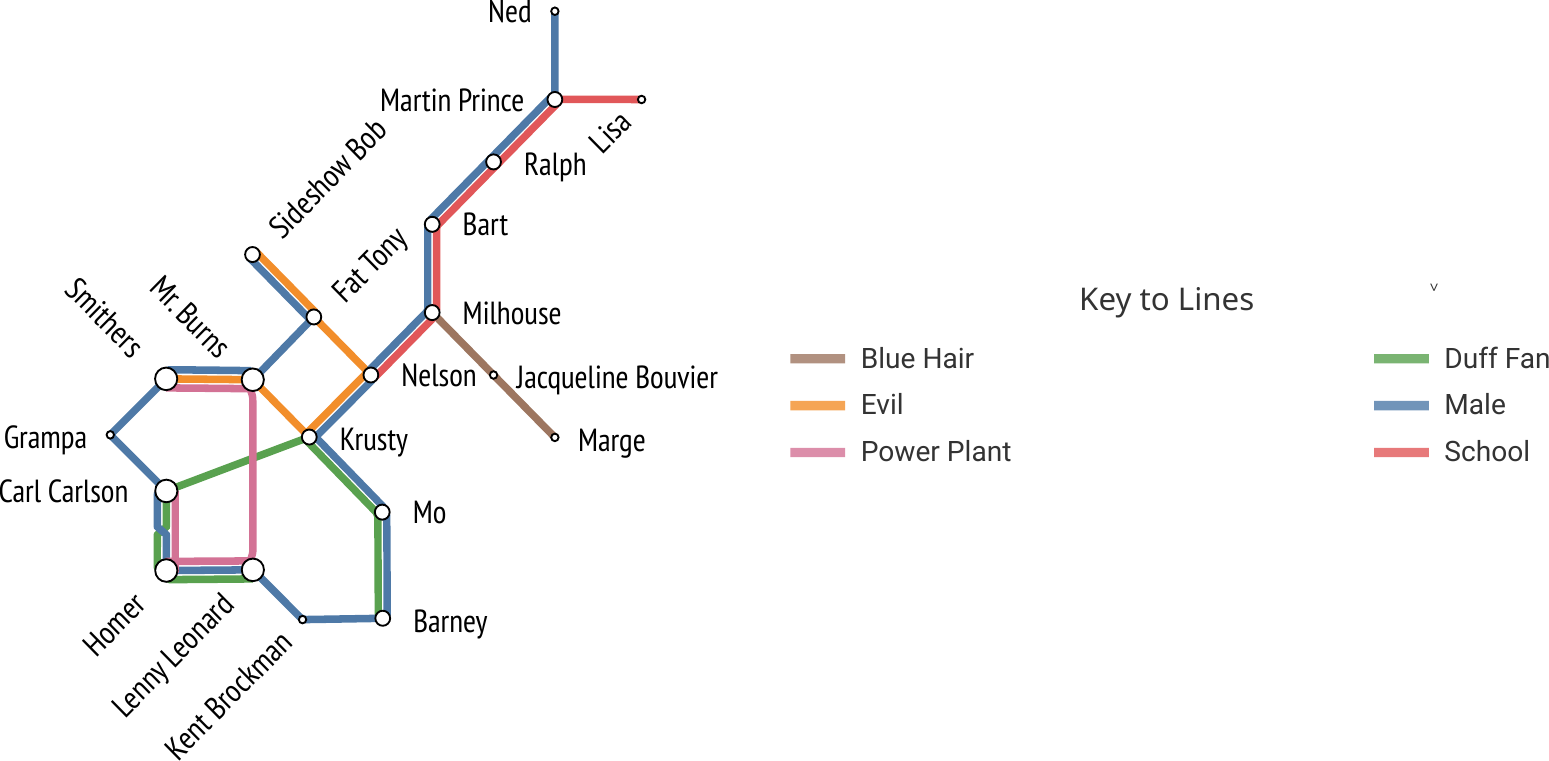}
    \caption{A visualization of a Simpsons hypergraph dataset using the MetroSets metaphor \citep{jacobsen2020metrosets}. The visualization is taken from \url{https://metrosets.ac.tuwien.ac.at/}. Hyperedges are represented by metro lines and elements are represented by stations.}
    \label{fig:metrosets}
\end{figure}
In particular, the LineSets~\citep{ahrc-dslnvt-11} and the MetroSets approach~\citep{jacobsen2020metrosets} use the metro map metaphor, where each hyperedge is a metro line and each vertex an interchange station.
While LineSets connects pre-embedded vertices with arbitary curves, MetroSets optimizes vertex positions and aims to visualize the result in an octolinear style; see \cref{fig:metrosets}.
Minimizing the visual complexity (which depends on the total number of bends along the metro lines),
makes the representations simpler to understand and work with.
A natural question is: which hypergraphs can be represented with just one bendless line segment per hyperedge?

With this in mind, we study the
problem of representing a hypergraph with vertices as points in the plane and hyperedges as straight lines through incident vertices.
Specifically, we consider several problems by varying one of the following requirements (see also \cref{fig:representations} and formal definitions in \cref{section:preliminaries}):
\begin{enumerate}
    \item[(a)] The curves must be line segments or (infinite) lines. %
    \item[(b)] Two curves are allowed to cross if they do not share an element, or not.
    \item[(c)] Two curves are allowed to overlap (i.e., to share a line segment), or not (called \emph{strict}).
\end{enumerate}

\begin{table}[b]
    \centering
    \caption{Complexity results for deciding whether a representation that possibly has crossings exists. The \emph{rank} of a hypergraph is the cardinality of its largest hyperedge. The value n.b.\ means unbounded max-degree or rank.}
    \label{table:resultswcrossings}
    \setlength{\tabcolsep}{3pt}
    \begin{tabular}{lccccccccc}
        \toprule
                                           & \multicolumn{3}{c}{lines}                 & \multicolumn{6}{c}{line segments}                                                                                                                                                                                                               \\
        \cmidrule(r){2-4}  \cmidrule(l){5-10}
                                           & \multicolumn{3}{c}{strict (= non-strict)} & \multicolumn{3}{c}{strict}        & \multicolumn{3}{c}{non-strict}                                                                                                                                                                              \\
        \cmidrule(r){2-4}  \cmidrule(lr){5-7}  \cmidrule(l){8-10}
        complexity                         & rank                                      & max-deg                           & ref                            & rank                     & max-deg                   & ref                                                 & rank    & max-deg & ref                                       \\
        \midrule
        \multirow{2}{*}{$\exists\RR$-hard} & $\ge 3$                                   & n.b.                              & Thm \ref{thm:erlinesrank3}     & \multirow{2}{*}{$\ge 3$} & \multirow{2}{*}{$\ge 16$} & \multirow{2}{*}{Thm \ref{thm:erstrictlinesegments}} & $\ge 3$ & $\ge 6$ & Thm \ref{thm:erlinesgmentsrank3}          \\
                                           & n.b.                                      & $\ge 3$                           & Cor \ref{thm:erlinesdeg3}      &                          &                           &                                                     & $\ge 5$ & $\ge 2$ & Thm \ref{thm:erlinesgmentsmdeg2}          \\
        \midrule
        \multirow{2}{*}{Poly-time}         & $\le 2$                                   & n.b.                              & Obs \ref{obs:linerank-2}       & $\le 2$                  & n.b.                      & Cor \ref{corr:strictlinesegmentpoly}                & $\le 2$ & n.b.    & Cor \ref{corr:strictlinesegmentpoly}      \\
                                           & n.b.                                      & $\le 2$                           & Obs \ref{obs:linerank-2}       & n.b.                     & $\le 2$                   & Cor \ref{corr:strictlinesegmentpoly}                & $\le 3$ & $\le 2$ & Thm \ref{thm:polylinesegmentrank3degree2} \\
        \bottomrule
    \end{tabular}
\end{table}
\begin{table}[t]
    \centering
    \caption{Complexity results for deciding whether a crossing-free representation exists. The value n.b.\ means unbounded max-degree or rank.}
    \label{table:resultswocrossings}
    \setlength{\tabcolsep}{3pt}
    \begin{tabular}{lccccccccc}
        \toprule
                                           & \multicolumn{3}{c}{lines}                 & \multicolumn{6}{c}{line segments}                                                                                                                                                                             \\
        \cmidrule(r){2-4}  \cmidrule(l){5-10}
                                           & \multicolumn{3}{c}{strict (= non-strict)} & \multicolumn{3}{c}{strict}        & \multicolumn{3}{c}{non-strict}                                                                                                                                            \\
        \cmidrule(r){2-4}  \cmidrule(lr){5-7}  \cmidrule(l){8-10}
        complexity                         & rank                                      & max-deg                           & ref                              & rank    & max-deg  & ref                                           & rank    & max-deg & ref                                           \\
        \midrule

        \multirow{2}{*}{$\exists\RR$-hard} & \multirow{2}{*}{?}                        & \multirow{2}{*}{?}                & \multirow{2}{*}{--}              & $\ge 5$ & $\ge 12$ & Thm \ref{thm:ercrossingfreestrictlinesegment} & $\ge 3$ & $\ge 6$ & Cor \ref{cor:ercrossingfreelinesegmentsrank3} \\
                                           &                                           &                                   &                                  & n.b.    & $\ge 2$  & Thm \ref{thm:erstrictlinesegmentmdeg2}        & $\ge 5$ & $\ge 2$ & Cor \ref{cor:ercrossingfreelinesegmentsmdeg2} \\

        \midrule

        \multirow{2}{*}{Poly-time}         & $\le 2$                                   & n.b.                              & Thm \ref{thm:crossfreelines}     & $\le 2$ & n.b.     & Obs \ref{obs:segmentrank2planarity}           & $\le 2$ & n.b.    & Obs \ref{obs:segmentrank2planarity}           \\
                                           & n.b.                                      & $\le 2$                           & Thm \ref{thm:crossfreelinesdeg2}                                                                                                                                          \\

        \bottomrule
    \end{tabular}
\end{table}

\subparagraph*{Contributions.} In an extensive complexity study, we investigate for which hyperedge cardinalities and vertex degrees each of the eight problems is $\exists\RR$-hard or solvable in polynomial time.
We also study special graph classes that always admit such representations.
Our contributions %
are detailed in \cref{table:resultswcrossings,table:resultswocrossings}.
Our results are structured by distinguishing representations, where crossings are permitted (\cref{section:representations}, \cref{table:resultswcrossings}) or not permitted (\cref{section:crossfreerepresentations}, \cref{table:resultswocrossings}).
\cref{section:crossfreerepresentations} further presents two hypergraph classes that always admit a crossing-free representation with segments.
\cref{section:beyondzerobend} discusses a century-old claim by Steinitz, and generalizes a counterexample. We start with preliminaries (\cref{section:preliminaries}) and a description of the Pappus configuration (\cref{section:pappus}) as an essential tool for constructing gadgets in our reductions.

\begin{figure}[bt]
    \centering
    \includegraphics[width=\textwidth]{./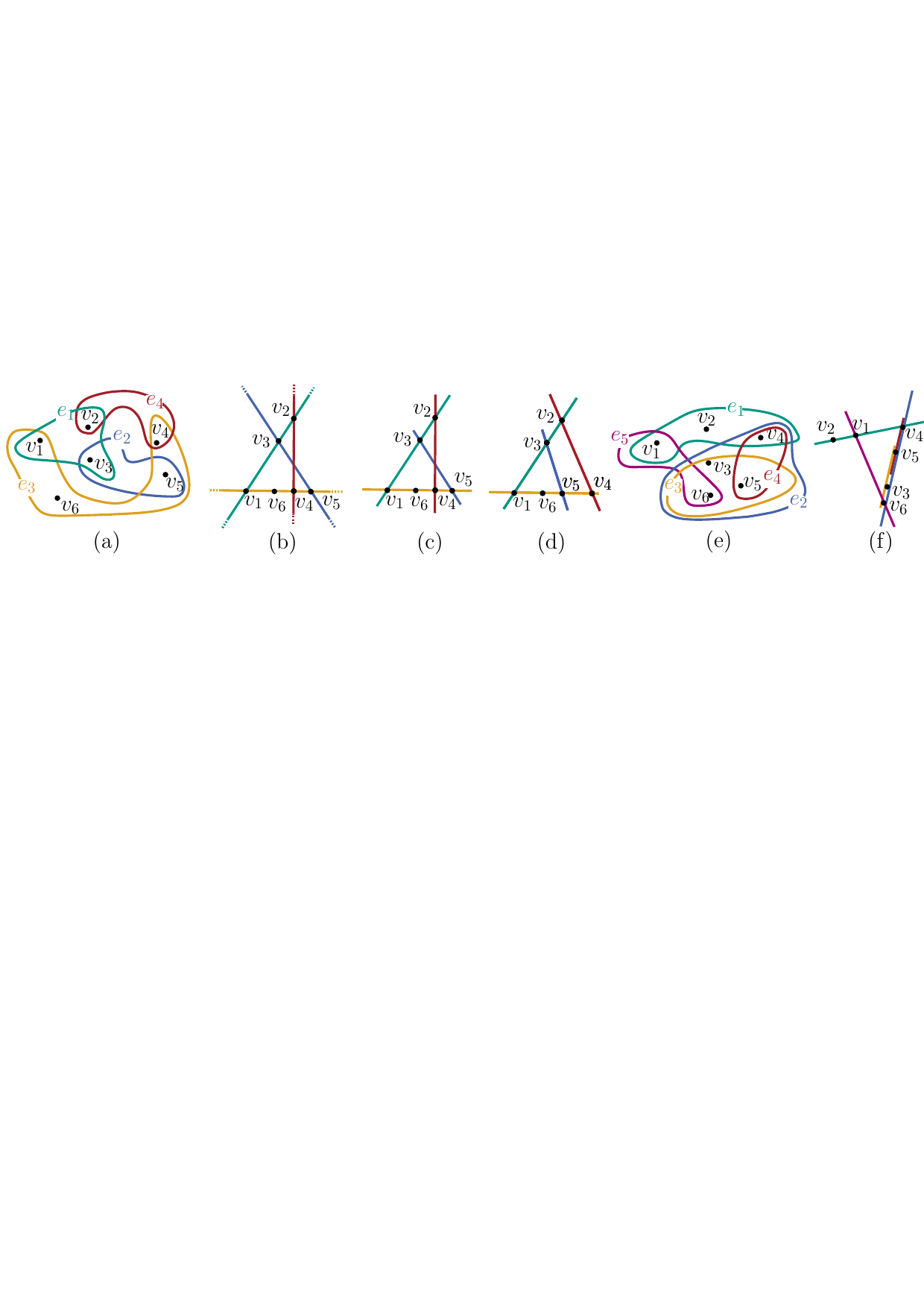}
    \caption{(a) A linear hypergraph $H$. (b) A strict line representation of $H$. (c) A strict segment representation of $H$. (d) A crossing-free strict segment representation of $H$. (e)--(f) A hypergraph and its non-strict segment representation.}
    \label{fig:representations}
\end{figure}

\subparagraph*{Related Work.}
Representing hypergraphs with one line per hyperedge relates to classical geometric problems dating back to the 19th century, particularly in the study of configurations~\citep{DBLP:books/ph/Gruenbaum09}. A {\em combinatorial} (or {\em geometric}) {\em configuration} is an abstract (or Euclidean) incidence structure with points and lines where each point is incident to the same number of lines, and each line to the same number of points. A {\em realization} of a combinatorial configuration is a geometric configuration, i.e., an embedding of the points and lines in the Euclidean plane that contains exactly the incidences that appear in the configuration. This was central to Steinitz's PhD thesis~\citep{steinitz1894konstruction} and studied in notable configurations like those of Desargues, Pappus, and Möbius–Kantor~\citep{DBLP:books/ph/Gruenbaum09}.
\Citet{steinitz1894konstruction} claimed that in a 3-uniform 3-regular linear hypergraph (where each point lies on three lines, each line passes through three points, and no two lines share more than one point), removing one point allows for a realization. This claim was later proven incorrect~\citep{DBLP:books/ph/Gruenbaum09,flowers2015embeddings}.
Citet{g-dc-95} had already noted the connection between configurations and hypergraph drawings, arguing that using straight lines for hyperedges improves readability. He later collected classical results and posed hypergraph realizability as an open question~\citep{DBLP:journals/dm/Gropp97}.

Deciding whether the representation of a hypergraph can be a crossing-free straight-line drawing of a tree can be done in polynomial time~\citep{DBLP:journals/siamcomp/SwaminathanW94}. A \emph{support graph} $G$ of a hypergraph $H$ is a graph with the same vertex set where each hyperedge of $H$ is connected in $G$.
Techniques for drawing hypergraphs via support graphs~\citep{bkmsv-psh-11a,BuchinKMSV09,bcps-psh-12,cgmny-spssh-19,cgmny-spssh-18} have a different focus and do not take into account whether vertices along hyperedges are collinear. Minimizing the number of crossings in a path-based support graph (hyperedges form a path in $G$) is \NP-hard~\citep{DBLP:conf/sofsem/FrankKKMPUW21}.
\Citet{DBLP:journals/dmtcs/FirmanS25}  consider special representations of $k$-uniform hypergraphs, where each vertex is an axis-aligned affine $\ell$-dimensional plane in $\RR^d$ ($d>\ell$ and $k=\binom{d}{\ell}$) and each hyperedge is a point in $\RR^d$ that is part of the planes defined by its incident vertices. They give a characterization by vertex cuts and give a polynomial-time algorithm to test whether a hypergraph has such a representation.
\Citet*{DBLP:conf/gd/BertschingerMKMW23} consider the problem of representing hypergraphs such that vertices are points in $\RR^d$ and hyperedges are bi-curved, difference-separable, convex sets cointaining their incident points. They show $\exists\RR$-hardness of the associated recognition problems.

The intersection graph of line segments~\citep{DBLP:journals/corr/Matousek14} can be seen as a strict segment representation of a linear hypergraph.
\Citet{DBLP:journals/ejc/Goncalves09} showed that some planar linear hypergraphs (see \citep{DBLP:journals/ejc/Goncalves09} for a definition) cannot be represented with straight line segments (in contrast to planar graphs).
Segment contact representations of planar graphs can sometimes help in finding crossing-free strict line representations for linear hypergraphs.
A necessary and sufficient condition for representing a graph as a contact system of segments is known~\citep{de2007representation,DBLP:journals/algorithmica/FraysseixM07}, but no polynomial-time algorithm exists to test it.
The hypergraph visualization problem is also similar to the stretchability problem~\citep{shor1991stretchability,s-csgtp-09}:  given an arrangement of pseudolines, is there a combinatorially equivalent arrangement of lines?
Unlike hypergraph representations, the order of vertices along each pseudoline is fixed in the stretchability problem.
Matroid representability~\citep{DBLP:journals/corr/abs-2301-03221} is another related problem, asking whether elements of a matroid can be represented as vectors in~$\RR^3$ such that independent sets are retained, becoming similar to hypergraph representations through projective transformation onto the plane.

\section{Preliminaries}\label{section:preliminaries}
A hypergraph $H = (V,E)$ is defined by a vertex set $V$ and a hyperedge set $E$, \revision{where each $e\in E$ is a distinct non-empty subset of $V$.}
The \emph{degree} of a vertex $v$ in $H$ is the number of hyperedges containing~$v$. The \emph{rank} of $H$ is the maximum cardinality $|e|$ over all hyperedges $e$ in $E$.
A hypergraph $H$ is \emph{$k$-uniform} if every hyperedge has cardinality $k$ and it is \emph{$k$-regular} if every vertex has degree~$k$. It is \emph{linear} if $|e\cap e'|\le 1$ for every pair of distinct hyperedges $e,e'\in E$.
The \emph{dual hypergraph} of $H$ is obtained by interchanging the role of vertices and hyperedges, i.e., the dual of $H=(V,E)$ is $H_d=(E,E')$ where $E'$ consists of hyperedges $\{e\mid v\in e\}$ for each $v\in V$. The \emph{hyperedge intersection graph} $G$ of $H$ is a graph with $E$ as its vertex set, where two vertices are adjacent if and only if the corresponding hyperedges share a common element.

A point $x$ is \emph{incident} to a line/line segment $\ell$ if and only if $\ell$ contains $x$.
A \emph{line representation} of a hypergraph consists of an injective mapping $\alpha$ of vertices to points in $\RR^2$ and an injective mapping $\beta$ of hyperedges to lines in $\RR^2$ such that $v\in e$ if and only if $\alpha(v)\in \beta(e)$ for $v\in V,e\in E$.
A \emph{segment representation} of a hypergraph is defined as a line representation with line segments instead of lines.
A line representation/segment representation is \emph{strict} if every pair of lines/line segments share at most one point.
It is crossing-free if a pair of lines/line segments $\beta(e)$ and $\beta(e')$ in the representation share a point if and only if $e\cap e'\ne \emptyset$.
We are concerned with the problem of deciding whether a hypergraph has a representation. We ask the following question.
\begin{problem}[$H$-Representation]
Given a hypergraph $H$, $\mathcal{S}\in\{\text{strict, non-strict}\}$, $\mathcal{C}\in\{\text{non-crossing-free,} \\ \text{crossing-free}\}$, and $\mathcal{X}\in\{\text{line, segment}\}$, does $H$ have an $\mathcal{S}\ \mathcal{C}\ \mathcal{X}$ representation?
\end{problem}
In fact, this leads to eight problem variants for the combinations of $\mathcal{S},\mathcal{C},\mathcal{X}$, which we all discuss. We often omit \emph{non-crossing-free} and \emph{non-strict}, i.e., by a segment representation we mean a non-strict non-crossing-free segment representation.

For simplicity of our presentation, we disallow two hyperedges to contain the same set of vertices.
With this in mind, let us point out the following observation, which helps combine some of our complexity results.
\begin{observation}\label{observation:strictvnonstrict}
    Let $H=(V,E)$ be a hypergraph. If $H$ contains a pair of hyperedges $e,e'$ with $|e\cap e'|\ge 2$ and $e\ne e'$, then $H$ \revision{has no line representation, no strict line representation, no crossing-free line representation, and no strict crossing-free line representation.} Otherwise, $H$ is linear. Hence, $H$
    \begin{itemize}
        \item has a line representation if and only if it has a strict line representation, and
        \item has a crossing-free line representation if and only if it has a crossing-free strict line representation.
    \end{itemize}
\end{observation}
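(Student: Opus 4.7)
The plan is to exploit a single elementary geometric fact: any two distinct lines in $\mathbb{R}^2$ share at most one point. Once this is invoked, both parts of the observation fall out immediately from the injectivity of $\alpha$ and $\beta$.

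For the first part, I would take $H$ with distinct hyperedges $e \ne e'$ satisfying $|e \cap e'| \ge 2$ and argue by contradiction. Suppose any of the four line-representation variants exists, given by maps $\alpha, \beta$. Since $\beta$ is injective, $\beta(e) \ne \beta(e')$ are distinct lines. Pick two distinct vertices $u,v \in e \cap e'$; by injectivity of $\alpha$, the points $\alpha(u), \alpha(v)$ are distinct, and by the definition of a representation they lie on both $\beta(e)$ and $\beta(e')$. This contradicts the geometric fact, so no representation of any of the four types can exist. This establishes the first sentence of the statement; in particular, only linear hypergraphs are candidates for having any line representation.

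For the second part, I would assume $H$ is linear and show that the two equivalences are essentially automatic. Only the forward directions ("line $\Rightarrow$ strict line" and "crossing-free line $\Rightarrow$ crossing-free strict line") require anything, since every strict (resp.\ strict crossing-free) line representation is a line (resp.\ crossing-free line) representation. For the forward direction, take any two distinct hyperedges $e, e'$; again by injectivity of $\beta$ their images $\beta(e), \beta(e')$ are distinct lines, and by the geometric fact they share at most one point. Hence the strictness condition is satisfied by every pair of image lines, and the given representation is strict. The crossing-free case uses the same argument without modification, since passing from "line representation" to "crossing-free line representation" only adds constraints that are preserved.

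There is no real obstacle here; the statement is essentially a bookkeeping lemma that isolates the observation that, for lines (as opposed to segments), strictness is automatic and only linear hypergraphs need to be considered. The only thing to be careful about is ensuring injectivity of both $\alpha$ and $\beta$ is used explicitly, as these are exactly the hypotheses that promote "two points in common" and "two hyperedges with common vertices" into "two distinct points on two distinct lines".
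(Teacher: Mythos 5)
Your proof is correct and follows exactly the reasoning the paper leaves implicit (the statement is given as an observation without a written proof): two distinct lines in $\mathbb{R}^2$ meet in at most one point, which both rules out representations when $|e\cap e'|\ge 2$ and makes strictness automatic for linear hypergraphs. Nothing is missing.
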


We investigate complexity properties of all these problems, and show several of them are $\exists \RR$-hard.
Here the class $\exists \RR$ is a complexity class between \NP\ and \PSPACE\ that contains all problems that can be reduced to solving an existentially quantified formula of polynomial equations and inequalities \revision{with integer coefficients}; this means that $\exists\RR$-hardness implies \NP-hardness.

\section{Pseudoline Stretchability and the Pappus Configuration} \label{section:pappus}
We often reduce from the $\exists\RR$-hard problem \textsc{pseudoline stretchability}~\citep{s-csgtp-09} to prove $\exists\RR$-hardness of our problems. A {\em pseudoline} is an $x$-monotone curve in $\mathbb{R}^2$ and a {\em simple pseudoline arrangement} is a set of pseudolines where every pair of pseudolines intersects exactly once and no three pseudolines meet at a common point. The \textsc{pseudoline stretchability} problem takes a simple pseudoline arrangement as input and seeks a combinatorially equivalent drawing where each pseudoline is drawn as a straight line segment, i.e., a homeomorphic line segment arrangement (see  \cref{fig:pappusgadget}(a)--(b)).

\begin{figure}[hb]
    \centering
    \includegraphics[width=\textwidth]{./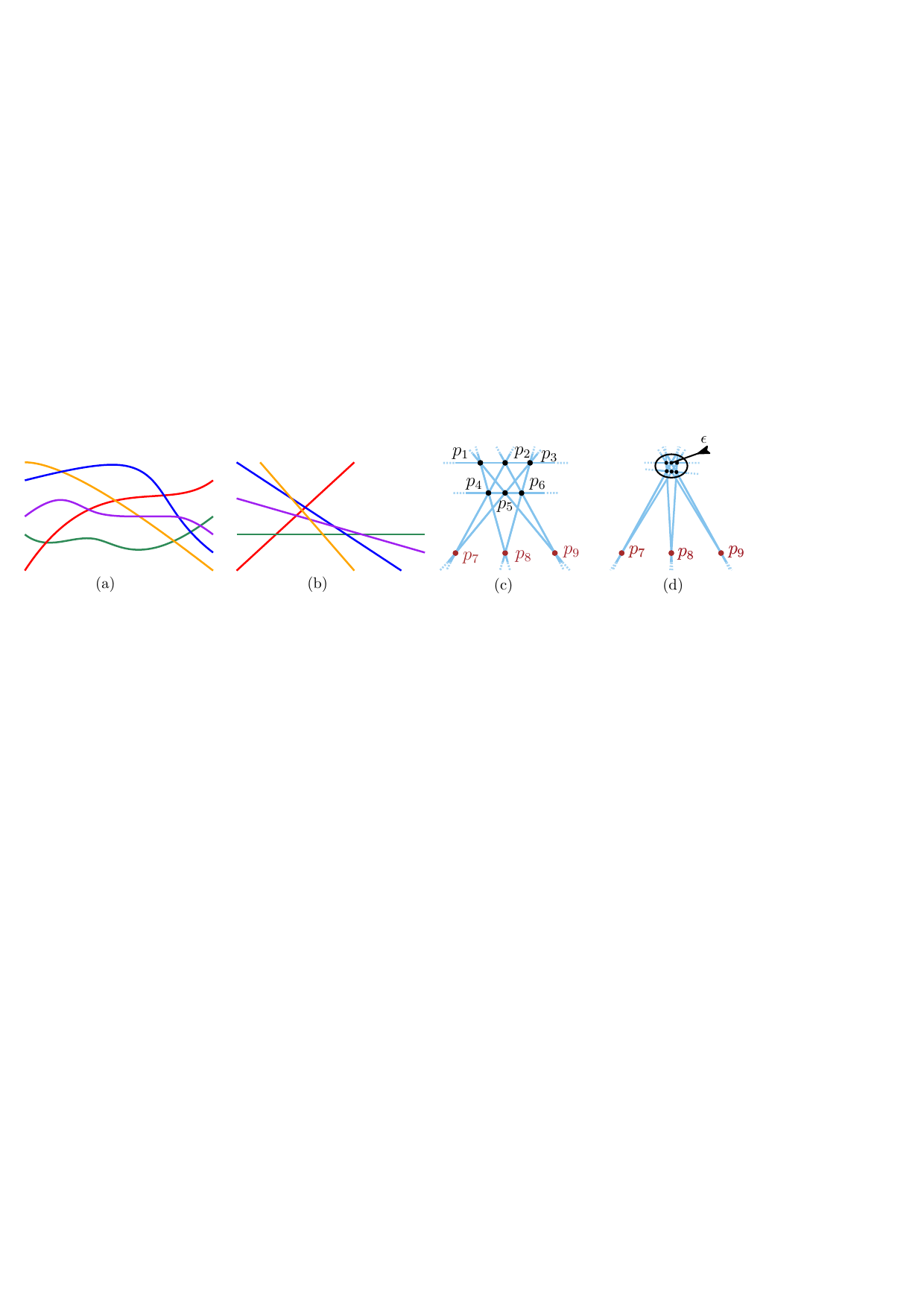}
    \caption{(a) A simple pseudoline arrangement and (b) its stretched line segment representation. (c) A line representation of the Pappus gadget. The three anchors $p_7,p_8,p_9$ are always collinear. (d) ``Pushing'' the points $p_1$ and $p_3$ very close to $p_2$ forces all fixers $p_1$--$p_6$ to be inside a disk of arbitrarily small radius $\epsilon$.}
    \label{fig:pappusgadget}
\end{figure}

Many of our reductions make use of the \emph{Pappus hypergraph} or \emph{Pappus configuration}, and its ability to force three points to be collinear, as stated below.

\begin{definition}
    The Pappus hypergraph or Pappus gadget $H_P=(V_P,E_P)$ is given as $V_P=\{p_1,\dots,p_9\}$ and $E_P$ consists of the following hyperedges.
    \[
        \begin{array}{l}
            \{p_1, p_2, p_3\}, \{p_4, p_5, p_6\}, \{p_1, p_4, p_8\}, \{p_1, p_5, p_9\}, \\
            \{p_2, p_4, p_7\}, \{p_2, p_6, p_9\}, \{p_3, p_5, p_7\}, \{p_3, p_6, p_8\}
        \end{array}
    \]

    We call $\{p_1,\dots,p_6\}$ the \emph{fixers} and $\{p_7,p_8,p_9\}$ the \emph{anchors}.
\end{definition}

For a line representation of the Pappus gadget see \cref{fig:pappusgadget}(c). Notice that all hyperedges have size three, all fixers have degree three, and all anchors have degree two.
The most important property is given next, and follows directly from the well-known Pappus theorem.
\begin{theorem}{\citep[Chapter 3.5]{coxeter1967geometry}}\label{thm:pappusthm}
    In every line or line segment representation (strict and non-strict) of the Pappus gadget the anchors are collinear.
\end{theorem}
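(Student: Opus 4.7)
The plan is to reduce the statement to the classical Pappus hexagon theorem, which the paper already cites via Coxeter. The key observation is that the eight hyperedges of $H_P$ encode exactly the incidences appearing in the Pappus configuration, so I need to translate the definition of a representation into these incidences and then invoke the classical theorem.

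Concretely, let $(\alpha,\beta)$ be any (strict or non-strict) line or segment representation of $H_P$. For each hyperedge $e \in E_P$, the ``only if'' direction of the incidence condition forces $\alpha(v) \in \beta(e)$ for every $v \in e$, so the three points $\alpha(v)$ with $v \in e$ lie on the single line or line segment $\beta(e)$; in either case they are collinear, since a line segment spans a unique line. Applying this to the two ``axis'' hyperedges $\{p_1,p_2,p_3\}$ and $\{p_4,p_5,p_6\}$ I obtain two collinear triples, and applying it to the six ``cross'' hyperedges gives me that each anchor $\alpha(p_i)$ (for $i \in \{7,8,9\}$) lies on the line through two specific fixers. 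The six cross-lines arranged this way are exactly the cross-joins of corresponding pairs of points on the two axes, so the incidence pattern is precisely the hypothesis of Pappus's theorem. The conclusion of Pappus's theorem then yields that $\alpha(p_7), \alpha(p_8), \alpha(p_9)$ are collinear.

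The step that requires the most care is ruling out degeneracies so that Pappus's theorem genuinely applies. The points $\alpha(p_1), \ldots, \alpha(p_6)$ are pairwise distinct by injectivity of $\alpha$. Moreover, the two axis lines must be distinct: if the lines through $\alpha(p_1),\alpha(p_2),\alpha(p_3)$ and $\alpha(p_4),\alpha(p_5),\alpha(p_6)$ coincided, then e.g.\ $\alpha(p_4)$ would lie on $\beta(\{p_1,p_2,p_3\})$, and the ``if'' direction of the incidence condition would force $p_4 \in \{p_1,p_2,p_3\}$, a contradiction. A similar argument rules out unwanted incidences between the anchors and the axes. I expect this bookkeeping to be the main (and only real) obstacle; after it is dispatched, the segment case requires no extra work, since segments inherit collinearity from their supporting lines.
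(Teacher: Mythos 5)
Your proof is correct and takes essentially the same route as the paper, which gives no argument beyond observing that the gadget's eight hyperedges encode exactly the incidences of the classical Pappus configuration and citing Coxeter. One small quibble with your degeneracy step: for \emph{segment} representations, coincident supporting lines of the two axes do not force $\alpha(p_4)$ onto the segment $\beta(\{p_1,p_2,p_3\})$ (the segments could be disjoint on a common line), but in that degenerate case all nine points are collinear and the conclusion holds trivially, so nothing is lost.
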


{Intuitively, our reductions use newly introduced vertices as fixers to force existing vertices which are the anchors to be represented in a collinear way.
Next, we show that given a fixed embedding of the anchors, the remaining part of the Pappus gadget can be represented somewhat flexibly such that it can avoid a finite set of predefined points and lines.
\begin{restatable}{lemma}{lemmapappusfreedom}\label{lemma:pappusfreedom}
    Let $H_P$ be a Pappus gadget such that $\gamma_1,\gamma_2$, and $\gamma_3$ are three distinct collinear points, let $A$ be a finite set of points disjoint from $\{\gamma_1,\gamma_2,\gamma_3\}$, and let $B$ be a finite set of lines.
    There exists a line representation $(\alpha,\beta)$ of $H_P$ with $\alpha(p_7)=\gamma_1,\alpha(p_8)=\gamma_2,\alpha(p_9)=\gamma_3$, where none of the lines in $\beta(E_P)$ passes through any of the points in $A$, and none of the fixers lie on any line in $B$.
\end{restatable}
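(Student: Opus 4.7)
The plan is to exhibit a five-dimensional family of line representations of $H_P$ with anchors at $\gamma_1,\gamma_2,\gamma_3$, and to show that within this family every forbidden event — a point of $A$ lying on some $\beta(e)$, or a fixer lying on some $\ell\in B$ — cuts out a proper algebraic subset. After placing the anchors, every representation of $H_P$ is determined by choosing (i) a point $p_2\in\RR^2$ not on the anchor line, (ii) a line $L_{123}$ through $p_2$, and (iii) two distinct points $p_1,p_3\in L_{123}\setminus\{p_2\}$; the remaining fixers are then forced to be $p_4=\overline{p_1\gamma_2}\cap\overline{p_2\gamma_1}$, $p_5=\overline{p_1\gamma_3}\cap\overline{p_3\gamma_1}$, and $p_6=\overline{p_2\gamma_3}\cap\overline{p_3\gamma_2}$, and their collinearity (needed for the hyperedge $\{p_4,p_5,p_6\}$) is automatic by \cref{thm:pappusthm}.

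\textbf{Collapse and continuity.} To simultaneously handle six of the eight hyperedge lines and the three derived fixers $p_4,p_5,p_6$, I would use the ``collapse'' picture from \cref{fig:pappusgadget}(d). First, pick $p_2$ off every line of $B$ and off every line $\overline{a\gamma_i}$ for $a\in A$ and $i\in\{1,2,3\}$; since only finitely many lines are excluded, such a $p_2$ exists. Then pick the slope of $L_{123}$ so that $L_{123}$ avoids the finite set $A$ and is not equal to any $\ell\in B$; finally, place $p_1$ and $p_3$ at signed distance $\pm\epsilon$ from $p_2$ along $L_{123}$. As $\epsilon\to 0$, each of $p_4,p_5,p_6$ tends to $p_2$, and each of the six ``mixed'' hyperedge lines (those containing an anchor $\gamma_i$) tends to $\overline{p_2\gamma_i}$. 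By construction, the limiting lines avoid $A$ and the limiting fixer positions avoid $B$, so by continuity the same holds for all sufficiently small $\epsilon>0$.

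\textbf{The Pappus line and main obstacle.} The remaining line $\beta(\{p_4,p_5,p_6\})$ degenerates in the $\epsilon\to 0$ limit because $p_4,p_5,p_6$ all coalesce at $p_2$, so the collapse argument does not directly control it; this is the main technical hurdle. I would close the gap by a dimension count: for each fixed $a\in A$, the condition ``$a$ lies on the Pappus line'' is the vanishing of a rational function in the five parameters, and this function is nonzero because one can exhibit even a single Pappus representation with anchors at $\gamma_1,\gamma_2,\gamma_3$ whose Pappus line avoids $a$ (e.g.\ by fixing $p_2$ and $L_{123}$ and observing that varying $p_1$ along $L_{123}$ makes the Pappus line sweep a one-parameter family of lines, so only finitely many choices of $p_1$ put $a$ on it). Hence each such condition carves out a proper algebraic subvariety; together with the collapse constraints these constitute finitely many proper subvarieties of the five-dimensional parameter space, whose complement is therefore nonempty and yields the desired representation.
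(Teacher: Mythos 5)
Your proposal is correct and follows essentially the same route as the paper's proof: fix the anchors, choose $p_2$ generically off the lines of $B$ and off the three lines $\overline{p_2\gamma_i}$-candidates through points of $A$, note that $p_1,p_3$ on a line through $p_2$ determine $p_4,p_5,p_6$ (collinear automatically by Pappus), and collapse $p_1,p_3$ towards $p_2$ so the fixers cluster near $p_2$ and the six mixed hyperedge lines converge to $\overline{p_2\gamma_1},\overline{p_2\gamma_2},\overline{p_2\gamma_3}$. The only divergence is the final step for $\beta(\{p_1,p_2,p_3\})$ and the degenerating Pappus line $\beta(\{p_4,p_5,p_6\})$, which the paper handles by rotating $L_{123}$ around $p_2$ by a small angle while you use a genericity/dimension-count argument (witnessed by the pencil of Pappus lines through $p_6$ as $p_1$ varies); these are equivalent perturbation arguments, and you correctly isolate the Pappus line as the one step the collapse does not control.
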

\begin{proof}
    Due to symmetries in the Pappus gadget we can assume that $\gamma_2$ lies on the line segment from $\gamma_1$ to $\gamma_3$. Now let $p$ be a point in the plane such that
    \begin{enumerate}
        \item $p$ is not on the line through $\gamma_1,\gamma_2,\gamma_3$,
        \item $p$ does not lie on any line in $B$, and
        \item the three lines passing through $p$ and $\gamma_1$, $p$ and $\gamma_2$, and $p$ and $\gamma_3$ do not contain any point from $A$.
    \end{enumerate}
    We now argue that for any $\epsilon>0$, $H_P$ can be represented such that $\alpha(p_i)=\gamma(p_i)$ for $i=7,8,9$, and the fixers of $H_P$ are represented inside the disk of radius $\epsilon$ around $p$ (see  \cref{fig:pappusgadget}(c)). Since $\gamma_2$ lies on the line segment from $\gamma_1$ to $\gamma_3$, we can represent $H_P$ such that the representation is homeomorphic to the one in \cref{fig:pappusgadget} (c) and $p_2$ lies exactly on $p$. This is because we can freely choose three collinear points defining the representation of $p_1,p_2$ and $p_3$ on a line different from the one through $p_7,p_8,p_9$. Together with the representation of $p_7,p_8,p_9$ this in turn defines the remaining representation of the Pappus gadget.
    By decreasing both the distance of $p_1$ to $p_2$ and $p_3$ to $p_2$ (i.e.\ pushing them towards $p_2$), all fixers of the Pappus gadget get closer to $\alpha(p_2)=x$ (see \cref{fig:pappusgadget}(d)). Thus, by making $\epsilon$ small enough, none of the following hyperedge representations contains any point of $A$, as they get ``closer and closer'' to the lines defined by $p$ and $\gamma_1,\gamma_2,\gamma_3$, respectively.
    \[\{p_1,p_4,p_8\},\{p_1,p_5,p_9\},\{p_2,p_4,p_7\},\{p_2,p_6,p_9\},\{p_3,p_5,p_7\},\{p_3,p_6,p_8\}\]
    We choose a small $\epsilon$ such that the disk around $p$ with radius $\epsilon$ neither contains any point from $A$ nor intersects with any line from $B$. Lastly, we show that the two remaining hyperedges $\{p_1,p_2,p_3\}$ and $\{p_4,p_5,p_6\}$ can be represented such they do not contain any point of $A$. To this end, we rotate the representation of $\{p_1,p_2,p_3\}$ around the point $p$ by some small angle $\epsilon'$ and adjust the representation accordingly. Notice that this also changes the slope of the representation of $\{p_4,p_5,p_6\}$.
\end{proof}

\section{Representations without bends}\label{section:representations}
\cref{table:resultswcrossings} shows our results for deciding whether a representation exists (irrespective of line or line segment crossings), which are proven in this section. The equivalence between results for line representations and strict line representations in \cref{table:resultswcrossings} comes from \cref{observation:strictvnonstrict}.

\subsection{Complexity results for lines}\label{section:representationslines}

\subparagraph*{Hardness results.}
We show that it is $\exists\RR$-hard to decide whether there exists a line representation for a given rank-$3$ hypergraph $H$.
We reduce from \textsc{Matroid Representability} \citep{DBLP:journals/corr/abs-2301-03221}. For the purposes of a simple description, we give here a simplified description of a variant of that problem that is still $\exists\RR$-hard \citep{DBLP:journals/corr/abs-2301-03221}.
We start with definitions.
A \emph{matroid} $M$ is given as $M=(X,\mathcal{I})$ where $X$ is the finite \emph{ground set} and $\mathcal{I}\subseteq 2^X$ is the set of independent sets with
\begin{enumerate}
    \item $\emptyset \in\mathcal{I}$,
    \item $I'\subset I\in \mathcal{I}$ implies $I'\in \mathcal{I}$, and
    \item $I_1,I_2\in \mathcal{I}$ with $|I_1|<|I_2|$ implies that there is an $x\in I_2\setminus I_1$ with $I_1\cup \{x\}\in\mathcal{I}$.
\end{enumerate}
A \emph{representation} of $M$ is an injective mapping $f(X):X\to \RR^3$ such that for any $Y\subseteq X$ we have $Y\in \mathcal{I}$ if and only if $f(Y)$ forms a set of linearly independent vectors in $\RR^3$.
The $\exists\RR$-hard problem \textsc{Matroid Representability} is given as input a matroid and the question is whether there is a representation $f$ of $M$.
For the vectors $v\in\RR^3$ we call the first, second, and third coordinate the $x,y,$ and $z$-coordinates, respectively.
We start by making some normalizations to $M$.
\begin{enumerate}
    \item First, we can assume that every independent set $I\in\mathcal{I}$ has cardinality at most $3$, as otherwise there is  no representation.
    \item Second, we can assume that each pair $\{x,x'\}\in {X\choose 2}$ forms an independent set, i.e.\ $\{x,x'\}\in\mathcal{I}$.
          Otherwise, $f(x)=cf(x')$ for some $c\in \RR$ must hold for any representation.
          We can remove $x'$ from $X$ and replace any occurrence of $x'$ in $\mathcal{I}$ by $x$, and obtain an equivalent instance w.r.t.\ representability.
\end{enumerate}

We are ready to prove the theorem.  %
\begin{restatable}{theorem}{erlinesrankthreethm}\label{thm:erlinesrank3}
    It is $\exists\RR$-hard to decide whether a rank-3 linear hypergraph has a line representation.
\end{restatable}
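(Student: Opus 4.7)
The plan is to reduce from the normalized variant of \textsc{Matroid Representability} described in the excerpt. Given an instance $M = (X, \mathcal{I})$, I would construct in polynomial time a rank-$3$ linear hypergraph $H$ that admits a line representation iff $M$ is representable, using Pappus gadgets to enforce the collinearity demanded by each dependent triple of $M$ without sacrificing either rank or linearity.

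The construction is as follows. Take the vertex set of $H$ to consist of $X$ together with fresh fixer vertices. For every triple $T = \{x, y, z\} \subseteq X$ with $T \notin \mathcal{I}$, attach a fresh Pappus gadget whose anchors are $x, y, z$; this introduces six private fixer vertices and the eight size-$3$ hyperedges of that gadget, but \emph{not} the triple $T$ itself. All hyperedges then have size three. For linearity, within a single Pappus gadget the pairwise intersection of hyperedges has size at most one by inspection, and across two distinct gadgets two hyperedges share only elements of $X$ (fixers are private), yet each Pappus hyperedge contains at most one anchor, so such intersections have size at most one as well.

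For the forward direction, suppose $M$ is representable by $f \colon X \to \RR^3$; by the normalization no two vectors in $f(X)$ are parallel. After a generic linear change of coordinates we may assume all vectors have nonzero $z$-coordinate; then projecting onto the affine plane $z=1$ yields a map $\alpha_0 \colon X \to \RR^2$ with the property that a triple of $X$ is collinear under $\alpha_0$ iff it is dependent in $M$. One then inserts the Pappus gadgets one at a time, each invocation of \cref{lemma:pappusfreedom} using $A$ as the set of previously placed points and $B$ as the set of already-used lines to route the new fixers safely, producing a valid line representation of $H$.

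For the backward direction, any line representation $(\alpha, \beta)$ of $H$ forces each dependent triple of $M$ to be collinear by \cref{thm:pappusthm}. Lifting via $(x, y) \mapsto (x, y, 1)$ produces vectors in $\RR^3$; pairs are linearly independent by injectivity of $\alpha$, and dependent triples of $M$ lift to linearly dependent vectors as required. The main obstacle is the converse direction needed for a valid matroid representation: every triple that is independent in $M$ must lift to three linearly independent vectors, i.e., to three non-collinear points in $\RR^2$. The plan here is a genericity argument exploiting the residual freedom in any line representation of $H$: the anchor positions can be perturbed subject to the collinearities enforced by the gadgets they participate in, with the fixers re-placed via \cref{lemma:pappusfreedom}; since each accidental collinearity of an independent triple is a codimension-one algebraic condition on the space of line representations of $H$, a sufficiently generic perturbation should break all of the finitely many bad conditions simultaneously while preserving all Pappus-enforced collinearities.
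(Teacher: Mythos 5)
Your construction encodes only the collinearity constraints of $M$ (one Pappus gadget per dependent triple) and defers the non-collinearity of independent triples to a genericity/perturbation argument in the backward direction. This is a genuine gap, and it cannot be patched: the collinearity of an independent triple is not in general a codimension-one condition on the space of line representations of $H$ --- it can hold identically, forced by an incidence theorem. The canonical counterexample is the non-Pappus matroid: nine elements of rank $3$ whose dependent triples are eight of the nine Pappus lines, with the ninth triple declared independent. It satisfies both normalizations and is not representable over $\RR$, precisely because Pappus' theorem forces the ninth collinearity in \emph{every} point configuration realizing the other eight. Yet the hypergraph $H$ your reduction produces from it does have a line representation: place the nine base points as an honest Pappus configuration and attach the eight gadgets using \cref{lemma:pappusfreedom}. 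So $H$ is a yes-instance while $M$ is a no-instance, and no perturbation helps, since the unwanted collinearity persists in every representation of $H$.

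The paper closes exactly this gap with a second gadget. For each \emph{independent} triple $\{x,x',x''\}$ it introduces an auxiliary vertex $d$, forces $d$ to be collinear with $x$ and $x'$ via a Pappus gadget with anchors $d,x,x'$, and adds the hyperedge $\{d,x''\}$. If $x,x',x''$ were collinear in a representation, then $d$ would lie on the line through $x,x',x''$, so the line representing $\{d,x''\}$ would also pass through $\alpha(x')$, an incidence not licensed by the hypergraph --- contradiction. Hence non-collinearity of independent triples is enforced combinatorially rather than generically. Your forward direction, the use of \cref{thm:pappusthm} for dependent triples, and the rank/linearity analysis of the gadgets are all fine; the missing ingredient is this non-collinearity gadget, without which the reduction is simply not a reduction.
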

\begin{proof}
    We reduce from \textsc{Matroid Representability}. We are given a matroid $M=(X,\mathcal{I})$ (we assume both discussed normalizations were applied already) and transform it to a hypergraph $H=(V,E)$ as follows. To construct $H$, we first add to $H$ the set $X$ as vertices.
    Now consider a triple $t:=\{x,x',x''\}\in \binom{X}{3}$.
    \begin{itemize}
        \item If $t$ does not form an independent set in $M$, we introduce a new Pappus gadget and let $t$ be its anchor (see \cref{fig:pappusgadget}(c)).
        \item Otherwise, we add a new point $d$, force $d$ to be collinear with $x$ and $x'$ using a new Pappus gadget with anchors $d,x$ and $x'$, and lastly add the hyperedge $\{d,x''\}$ to $E$ (see \cref{fig:erstrictlinegadets}(a)).
    \end{itemize}
    We argue that $H$ has a line representation if and only if $M$ has a representation:

    ``$\Rightarrow$'': let $(\alpha,\beta)$ be a line representation of $H$.
    For $x\in X$, let $\alpha(x)=(r,s)^T$. We set $f(x)=(r,s,1)^T$ and claim that $f$ is a representation of $M$.
    Consider any triple $t:=\{x,x',x''\}\in \binom{X}{3}$.
    \begin{itemize}
        \item If $t$ forms an independent set then $x,x',x''$ are not collinear: if they were, then consider the line representation of the corresponding gadget (\cref{fig:erstrictlinegadets}(a)).
              As $x,x',x''$ would be collinear, the line through $d$ and $x''$ would also pass through $x'$ -- a contradiction, because we assumed that $(\alpha,\beta)$ is a line representation.
        \item Otherwise, the Pappus gadget involving $t$ forces $x,x',x''$ to be on a line in $(\alpha,\beta)$ by \cref{thm:pappusthm} -- and hence $x,x',x''$ are on a common hyperplane containing the origin in $f)$.
    \end{itemize}

    ``$\Leftarrow$'': let $f$ be a representation of $M$.
    First, if any $f(x')$, $x'\in X$, has $z$-coordinate $0$ we multiply all $f(x)$, $x\in X$, by the same rotation matrix $R\in SO(3)\subseteq\RR^{3\times 3}$ such that no $f(x)$ has $z$-coordinate $0$. This is possible, as $X$ is finite.
    Second, we scale each $f(x)$ by $1$ divided by its $z$-coordinate so that the $z$-coordinates of all vectors in $f(X)$ are $1$. The new $f$ is still a representation.
    For $x\in X$, we set $\alpha(x)$ equal to the point defined by the first two coordinates of $f(x)$. Essentially, we applied a projective transformation. Lastly, we have to find a valid representation for the remaining vertices of $H$.
    Consider the triples $t:=\{x,x',x''\}\in \binom{X}{3}$  one by one. We find a representation for its involved gadgets one by one.
    \begin{itemize}
        \item If $t$ forms a dependent set in $M$, we place the Pappus gadget containing $t$ such that none of the lines of the gadget introduce unwanted incidences. This is always possible as $\alpha(x),\alpha(x'),$ and $\alpha(x'')$ are collinear and because of \cref{lemma:pappusfreedom}.
        \item If $t$ forms an independent set in $M$, let $d$ be the newly introduced gadgets point. We first set $\alpha(d)$ to be a point on the line segment connecting $\alpha(x)$ and $\alpha(x')$, such that the line $\beta(\{d,x''\})$ through $d$ and $x''$ has no unwanted incidences. This is possible as there is certainly no line through $x$ and $x'$ by construction. Lastly, we represent the Pappus gadget containing $x,x'$, and $d$, such that there are no unwanted incidences. Again, this is always possible because of \cref{lemma:pappusfreedom}.\qedhere
    \end{itemize}
\end{proof}

\begin{figure}[t]
    \centering
    \includegraphics{./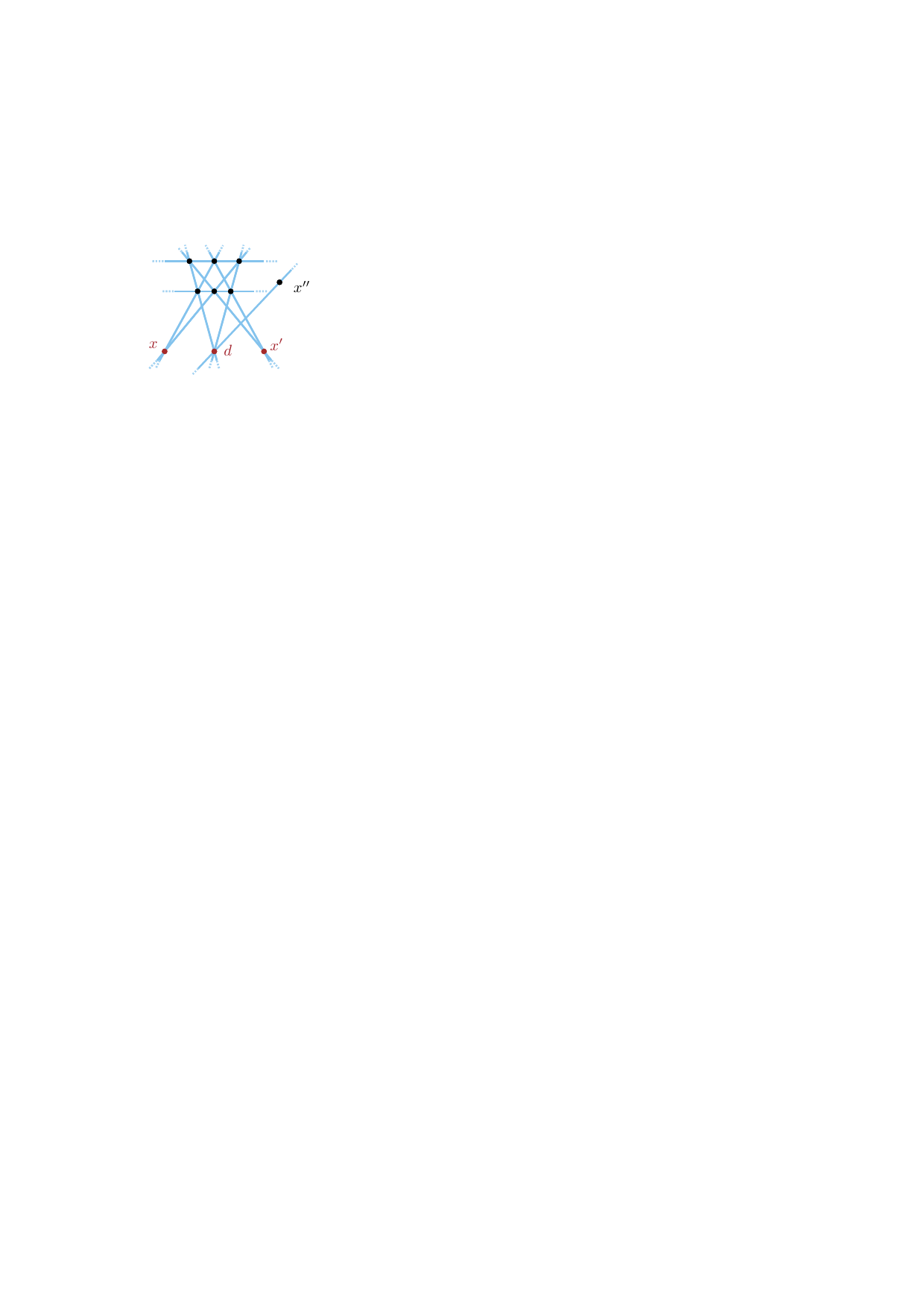}
    \caption{A gadget forcing $x,x',$ and $x''$ not to be collinear in the proof of \cref{thm:erlinesrank3}.}
    \label{fig:erstrictlinegadets}
\end{figure}

The reduction in \cref{thm:erlinesrank3} produces a linear hypergraph without two degree-1 vertices being part of the same hyperedge. Hence, applying a standard point-line duality transformation to a line representation $H$ results in a representation of its max-degree-$3$ dual hypergraph and vice versa. This is enough to show that $\exists\RR$-hardness also holds for max-degree-$3$ hypergraphs.
\begin{corollary}\label{thm:erlinesdeg3}
    It is $\exists\RR$-hard to decide whether a max-degree-3 hypergraph has a line representation.
\end{corollary}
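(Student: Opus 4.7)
The plan is to reduce from the rank-$3$ line representation problem of \cref{thm:erlinesrank3} by passing to the dual hypergraph. Let $H=(V,E)$ denote the hypergraph output by the reduction of that theorem, and let $H_d=(E,E')$ be its dual. Since every hyperedge of $H$ has cardinality at most $3$, every vertex $v\in V$ is in at most $3$ hyperedges of $H_d$, so $H_d$ has maximum degree at most $3$. The reduction from $M=(X,\mathcal I)$ to $H_d$ is clearly polynomial, so it remains to verify that $H_d$ is a well-defined hypergraph (no repeated hyperedges) and that $H$ admits a line representation if and only if $H_d$ does.

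The first step is to verify that $H$ has minimum degree at least $2$. Each Pappus fixer has degree $3$ and each Pappus anchor has degree at least $2$ inside its gadget; each auxiliary vertex $d$ of an independent-triple gadget is a Pappus anchor (degree $2$ in the gadget) and also lies on the hyperedge $\{d,x''\}$, giving degree $3$; and any element $x\in X$ (assuming $|X|\ge 3$, which we may do since otherwise \textsc{Matroid Representability} is trivial) is an anchor of at least two gadgets. Because $H$ is linear by construction, any two vertices of $H$ with a common pair of hyperedges would violate $|e\cap e'|\le 1$. Combined with the minimum-degree-$2$ property, this means that no two vertices of $H$ have identical neighbourhoods, so the hyperedges of $H_d$ are pairwise distinct.

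The second step is to transfer representations using projective point--line duality. Given a line representation $(\alpha,\beta)$ of $H$, rotate it by a generic small angle so that no line of $\beta(E)$ is vertical; this only permutes incidences trivially and introduces no new ones. Apply the standard duality $\phi\colon(a,b)\mapsto\{y=ax-b\}$, which is an involution on the set of points and non-vertical lines and preserves incidence. Define $\alpha'(e):=\phi(\beta(e))$ for $e\in E$ (the vertices of $H_d$) and $\beta'(v):=\phi(\alpha(v))$ for $v\in V$ (the hyperedges of $H_d$). Incidence preservation gives $\alpha(v)\in\beta(e)\iff\alpha'(e)\in\beta'(v)$, so $(\alpha',\beta')$ is a line representation of $H_d$; injectivity of $\alpha'$ and $\beta'$ follows from injectivity of $\alpha,\beta$. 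The converse direction is symmetric: a line representation of $H_d$ dualises to a line representation of~$H$.

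The main obstacle is really just bookkeeping around the duality: ensuring that $H_d$ has no duplicated hyperedges (handled by the linearity and minimum-degree-$2$ argument above) and that vertical lines do not spoil the affine duality (handled by the generic rotation). Once these are in place, the equivalence of representability under duality is immediate, and $\exists\RR$-hardness for max-degree-$3$ hypergraphs follows directly from \cref{thm:erlinesrank3}.
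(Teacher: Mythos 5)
Your proposal is correct and follows essentially the same route as the paper, which dispatches this corollary in one sentence by noting that the reduction of \cref{thm:erlinesrank3} yields a linear hypergraph with no two degree-1 vertices on a common hyperedge, so that standard point--line duality transfers representability to the max-degree-3 dual. Your write-up merely fills in the details the paper leaves implicit (distinctness of the dual hyperedges and the handling of vertical lines); the only slips are cosmetic, e.g.\ the degree bound on $H_d$ comes from the rank of $H$ rather than from ``$v\in V$ being in at most 3 hyperedges of $H_d$'', and for $|X|=3$ the minimum degree of $H$ can be 1, though the paper's weaker condition still suffices there.
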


\subparagraph*{Solvable cases.}
By definition, non-linear hypergraphs do not have line representations. However, every rank-2 hypergraph (which is linear) and every linear max-degree-2 hypergraph has a strict line representation by starting with points or lines in general position, respectively.
\begin{observation}\label{obs:linerank-2}
    Every rank-2 hypergraph and every linear max-degree-2 hypergraph has a line representation.
\end{observation}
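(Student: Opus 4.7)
The statement has two independent parts, so the plan is to handle them separately with two standard general-position arguments.

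For the \textbf{rank-2} case, I would first note that a rank-2 hypergraph is automatically linear, because any two distinct hyperedges of size at most $2$ can share at most $1$ vertex (otherwise they would be identical as sets). The plan is to place the vertex set $V$ as distinct points in the plane in general position, meaning that no three points are collinear. For every hyperedge $e$ of size~$2$, define $\beta(e)$ to be the unique line through the two corresponding points; by general position, no third vertex lies on this line, and no two such lines coincide (since two distinct lines can share at most one point while coincident lines would share two vertices). For every hyperedge $e = \{v\}$ of size~$1$, choose $\beta(e)$ as a line through $\alpha(v)$ that avoids (i) every other vertex, (ii) every line already chosen for another hyperedge, and (iii) passing through any other $\alpha(u)$. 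Since each condition excludes only finitely many directions through $\alpha(v)$, such a line exists. Both $\alpha$ and $\beta$ are injective and the incidence condition holds by construction.

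For the \textbf{linear max-degree-2} case, I would dualize the approach: pick lines first and points second. Choose a distinct line $\beta(e)$ for every hyperedge $e \in E$ so that the arrangement $\{\beta(e):e\in E\}$ is in general position (no two parallel and no three concurrent). Now place the vertices: a degree-$2$ vertex $v$ contained in hyperedges $e_1, e_2$ is placed at the unique intersection $\beta(e_1) \cap \beta(e_2)$, which by the no-three-concurrent condition lies on no third line $\beta(e_3)$; a degree-$1$ vertex $v \in e$ is placed at an arbitrary point of $\beta(e)$ avoiding all intersection points with other lines and all previously placed vertices; a degree-$0$ vertex is placed anywhere off every line and off every previously used point. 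Each placement rule excludes only finitely many choices from an infinite set, so $\alpha$ can be chosen injective.

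Finally I would verify that $v \in e \iff \alpha(v) \in \beta(e)$ holds. The ``only if'' direction is immediate from the placement. For ``if'', the linearity assumption is essential: if $\alpha(v)\in\beta(e)$ but $v\notin e$, this could only happen for a degree-$2$ vertex $v$ placed at the intersection of $\beta(e_1),\beta(e_2)$ that accidentally coincides with an intersection on a third line $\beta(e)$, which is ruled out by the general-position choice of the line arrangement. The only real subtlety is ensuring that all exclusion sets are finite at each stage, which is straightforward; there is no serious obstacle since the vertex set and hyperedge set are finite.
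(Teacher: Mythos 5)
Your proposal is correct and follows essentially the same route as the paper, which handles the rank-2 case by placing points in general position and the linear max-degree-2 case by placing lines in general position; you have simply spelled out the details (handling of size-1 hyperedges, injectivity, and the role of linearity) that the paper leaves implicit.
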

\subsection{Complexity results for line segments}\label{section:representationslinesegments}

\subparagraph*{Hardness results.}
The first result is that deciding whether a hypergraph has a segment representation is $\exists\RR$-hard, even for hypergraphs of constant rank and constant maximum degree.
\begin{theorem}\label{thm:erlinesgmentsrank3}
    It is $\exists\RR$-hard to decide whether a rank-$3$ max-degree-$6$ hypergraph has a segment representation.
\end{theorem}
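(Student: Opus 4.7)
The plan is to reduce from \textsc{Matroid Representability}, following the overall template of \cref{thm:erlinesrank3} but with a substantially strengthened gadget for handling independent triples. For each dependent triple $t=\{x,x',x''\}$ I would attach a Pappus gadget with $t$ as its anchor; by \cref{thm:pappusthm} this forces $t$ to be collinear in any segment representation, exactly as in the line case. Conversely, given a matroid representation, \cref{lemma:pappusfreedom} allows each such Pappus gadget to be realized without creating unwanted incidences.

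The subtle step is enforcing non-collinearity for an independent triple $t=\{x,x',x''\}$. In the line setting, a single rank-$2$ hyperedge $\{d,x''\}$, with $d$ pinned onto the line through $x$ and $x'$ by a Pappus gadget, suffices: the infinite line $\beta(\{d,x''\})$ would necessarily pass through $x'$ whenever $x,x',x''$ are collinear. For segments this argument fails, since the segment from $d$ to $x''$ can simply stop short of $x'$. I would therefore replace the single auxiliary vertex by a small, constant-sized cluster of auxiliary vertices, each pinned onto the line through $x$ and $x'$ by its own Pappus gadget, together with additional rank-$3$ hyperedges among these vertices and $x''$. The property to be established is that, no matter how the endpoints of the involved segments are chosen along the putative common line, the span of at least one hyperedge is forced to cover one of $x,x',x''$ as an unwanted incidence, so that any collinear arrangement of $x,x',x''$ leads to a contradiction.

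For the backward direction, starting from a matroid representation I would apply the projective-transformation argument of \cref{thm:erlinesrank3} to place the points $\alpha(x)$ for $x\in X$, ensuring that no three are accidentally collinear beyond what the matroid demands. Each Pappus gadget and each non-collinearity cluster can then be realized using \cref{lemma:pappusfreedom}, which provides enough slack to avoid any finite set of forbidden lines or points. Since every hyperedge used has size at most $3$, the hypergraph has rank~$3$.

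The main obstacle, and the delicate heart of the reduction, is bounding the maximum degree by $6$. Each matroid element $x\in X$ participates in many triples, so the naive construction gives unbounded degree. I would overcome this by replacing each high-degree vertex by a chain of copy-vertices $x^{(1)},x^{(2)},\dots$ linked by Pappus gadgets that force them to be collinear along a private line, with each copy then used as an anchor in only a bounded number of further gadgets; the triple-gadgets are rerouted through these copies in such a way that the same combinatorial constraints are imposed as before. Careful accounting of anchor uses should show that every vertex participates as anchor in at most three Pappus gadgets and in no other hyperedges, yielding degree at most $6$. Verifying simultaneously the correctness of the strengthened non-collinearity cluster and of the copy-reduction for bounding degrees is the most technical part of the argument.
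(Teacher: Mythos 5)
Your proposal has a genuine gap at exactly the point you identify as ``the delicate heart of the reduction,'' and it is not a gap that careful bookkeeping will close. First, the non-collinearity cluster is never constructed: you state the property it must have (``the span of at least one hyperedge is forced to cover one of $x,x',x''$'') but give no gadget achieving it, and there is a structural reason to be pessimistic. In a (non-strict) segment representation the only way to derive a contradiction is to force a segment to \emph{contain} an unwanted point, and segments are precisely free to stop short of, or start after, any finite set of points on their supporting line; pinning auxiliary vertices onto the line through $x$ and $x'$ via Pappus gadgets constrains supporting lines, not endpoints. Second, the degree-$6$ bound rests on a copy-chain gadget that does not preserve semantics: making copies $x^{(1)},x^{(2)},\dots$ collinear along a private line does not place them at the same point of $\RR^2$, so rerouting a triple constraint through $x^{(j)}$ imposes a different geometric condition than the one involving $x$. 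Note that even the paper's line-representation reduction from \textsc{Matroid Representability} (\cref{thm:erlinesrank3}) leaves the maximum degree unbounded, so the degree reduction you are attempting is an additional unproved claim, not a routine refinement.

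The paper's actual proof takes an entirely different and much lighter route: it reduces from \textsc{pseudoline stretchability}. The vertices are the intersection points of the arrangement, and for each pseudoline with intersection points $v_1,\dots,v_t$ in order one adds the overlapping triples $\{v_1,v_2,v_3\},\{v_2,v_3,v_4\},\dots,\{v_{t-2},v_{t-1},v_t\}$. Because consecutive triples share \emph{two} vertices, any segment representation forces all segments of one such hyperpath onto a single line with the vertices in the prescribed order --- no Pappus gadgets and no non-collinearity gadgets are needed, and non-strictness is used essentially (the hypergraph is deliberately non-linear). Each vertex lies on two pseudolines and appears in at most three consecutive triples per pseudoline, which is where the degree bound of $6$ comes from. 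If you want to salvage your approach, you would at minimum need to exhibit and verify the non-collinearity cluster and replace the copy chain by a construction that genuinely identifies points; as written, neither exists.
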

\begin{proof}
    We reduce from \textsc{pseudoline stretchability}~\citep{s-csgtp-09}.
    Given an instance $I$ of \textsc{pseudoline stretchability},
    we construct a hypergraph~$H$ by taking intersection points as vertices. For each pseudoline with the intersection points $v_1, v_2,\dots,v_t$ in this order, we construct hyperpaths consisting of the hyperedges $\{v_1,v_2,v_3\},\{v_2,v_3,v_4\}, \dots,\{v_{t-2}, v_{t-1}, v_t\}$ (\cref{fig:nonstrHard}(a)--(b)).

    By the construction of $H$, a solution to the \textsc{pseudoline stretchability} instance corresponds to a representation with straight line segments (\cref{fig:nonstrHard}(c)): We assume that $H$ has a representation with straight line segments. Observe that then the union of hyperedges forming the hyperpaths as above appear together as one single line segment with the vertices in the correct order.
    Conversely, if $I$ is stretchable, then $H$ has a segment representation.
\end{proof}

\begin{figure}[h]
    \centering    \includegraphics[width=\linewidth]{./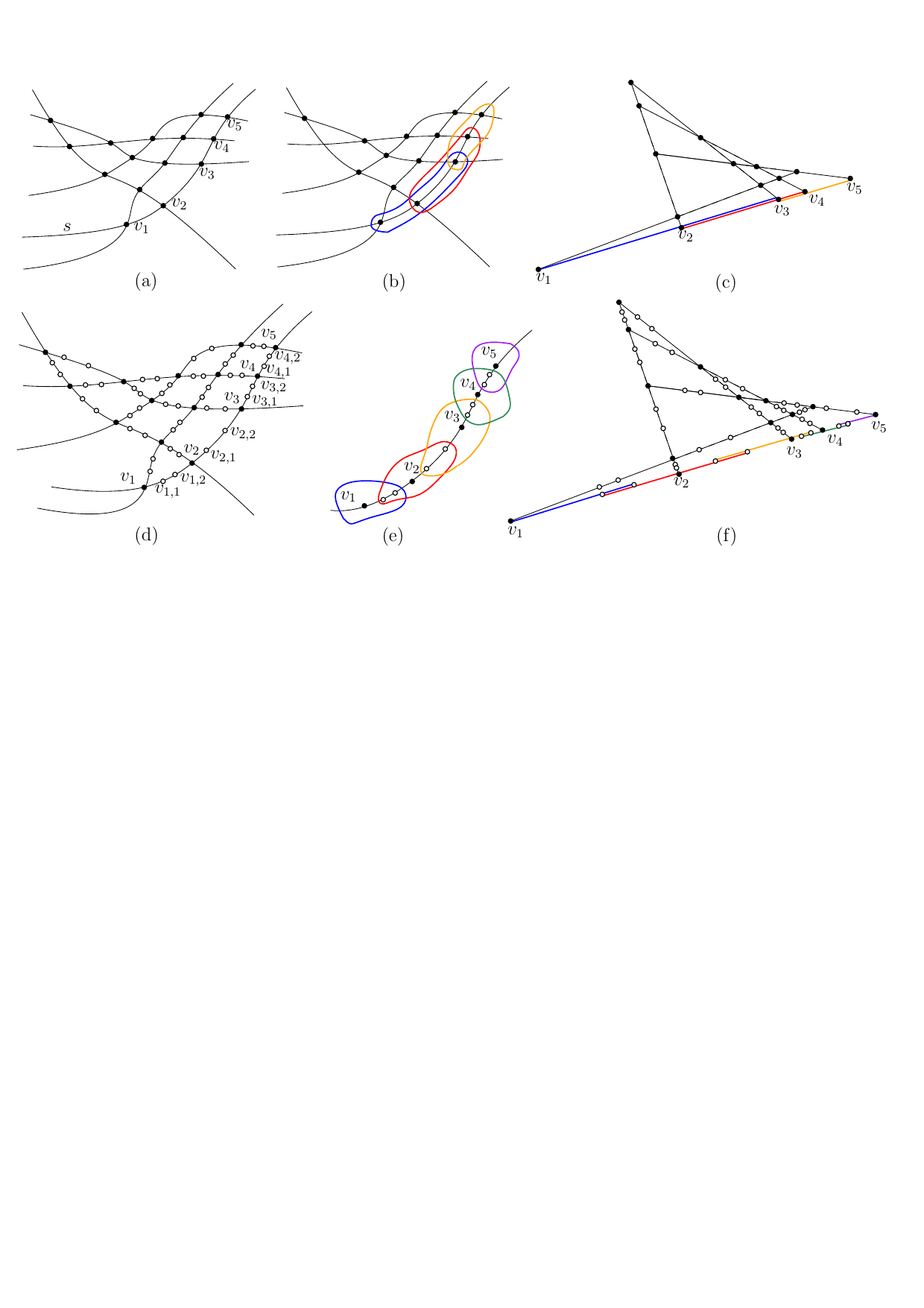}
    \caption{(a) A pseudoline arrangement. (b) Illustration of the hyperedges corresponding to a pseudoline as in the proof of \cref{thm:erlinesgmentsrank3}.  (c) A corresponding segment representation. (d)--(f) Illustration for the reduction of \cref{thm:erlinesgmentsmdeg2}. }
    \label{fig:nonstrHard}
\end{figure}

We show later that the problem becomes tractable for rank $2$ instead of rank~$3$. Surprisingly, this is not the case for maximum degree $2$ even for constant rank, as shown with a similar but more involved reduction.

\begin{theorem}\label{thm:erlinesgmentsmdeg2}
    It is $\exists\RR$-hard to decide whether a rank-$5$ max-degree-$2$ hypergraph has a segment representation.%
\end{theorem}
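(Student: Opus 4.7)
The plan is to adapt the reduction from \textsc{pseudoline stretchability} used in the proof of \cref{thm:erlinesgmentsrank3}. The obstacle there for max-degree~$2$ is that the chain $\{v_1,v_2,v_3\}, \{v_2,v_3,v_4\}, \ldots$ repeats each crossing vertex in two consecutive hyperedges of the same chain, and in the new setting this would push every interior crossing to degree~$4$ (two contributions from each of its two pseudoline-chains). I therefore must channel the ``sharing'' that enforces collinearity along a pseudoline entirely through \emph{fresh} auxiliary vertices.

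For each pseudoline $p$ of the input arrangement with crossing vertices $v_1^p,\ldots,v_{t_p}^p$ listed in pseudoline-order, I would introduce fresh auxiliaries $a_0^p,b_0^p,a_1^p,b_1^p,\ldots,a_{t_p}^p,b_{t_p}^p$ and define a chain of rank-$5$ hyperedges
\[
H_k^p \;=\; \{a_{k-1}^p,\,b_{k-1}^p,\,v_k^p,\,a_k^p,\,b_k^p\}, \qquad k=1,\ldots,t_p.
\]
Each crossing vertex then lies in exactly one hyperedge per pseudoline through it, so its total degree is~$2$. Each inner auxiliary $a_k^p,b_k^p$ with $1\le k\le t_p-1$ lies precisely in the two consecutive hyperedges $H_k^p$ and $H_{k+1}^p$, also degree~$2$; and the four endpoint auxiliaries of each chain have degree~$1$. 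Hence the resulting hypergraph $H$ has rank~$5$ and maximum degree~$2$, as required.

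For the forward direction, a stretching of the arrangement immediately induces a segment representation: place each $v_k^p$ at its stretched intersection on the line $L_p$, place the pairs $(a_k^p,b_k^p)$ generically on $L_p$ strictly between consecutive crossings (and beyond the extreme crossings for the endpoint pairs), and realize $H_k^p$ as the minimal segment of $L_p$ containing its five prescribed vertices. For the backward direction, $|H_k^p \cap H_{k+1}^p|=2$ with two \emph{distinct} auxiliaries forces the segments of $H_k^p$ and $H_{k+1}^p$ onto a common line; by induction the whole $p$-chain lies on a single line $L_p$, and every crossing $v_{pq}$ of the arrangement coincides with the unique point $L_p \cap L_q$.

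The main obstacle will be verifying that the crossings $v_1^p,\ldots,v_{t_p}^p$ appear along $L_p$ in the pseudoline order (up to reflection). The argument exploits the ``if and only if'' clause of the segment representation definition: letting $I_k$ be the segment realizing $H_k^p$ on $L_p$, no hypergraph vertex outside $H_k^p$ may lie on $I_k$, so in particular $v_{k+1}^p \notin I_k$ and $v_k^p \notin I_{k+1}$; together with the fact that $I_k$ and $I_{k+1}$ overlap on a subsegment containing $\{a_k^p,b_k^p\}$, this places $v_k^p$ and $v_{k+1}^p$ on opposite sides of $I_k\cap I_{k+1}$ on $L_p$. A short induction then rules out every non-monotone arrangement, since any swap $v_{k+1}^p,v_k^p$ in line order forces some $H_j^p$-segment to sweep over a forbidden $v_i^p$. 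After a standard generic perturbation to eliminate unwanted extra collinearities introduced by extending the chain-segments to full lines, the line arrangement extracted from the representation is combinatorially equivalent to the input pseudoline arrangement, closing the reduction.
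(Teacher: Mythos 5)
Your construction is essentially the paper's reduction: the paper likewise reduces from \textsc{pseudoline stretchability} and, for each pseudoline, threads the crossing vertices onto a chain of hyperedges that pairwise share a fresh pair of auxiliary vertices (the paper uses rank-$3$ hyperedges at the two chain ends instead of your extra endpoint pairs, a cosmetic difference). Your ordering argument via the forbidden incidences $v_{k+1}^p\notin I_k$, $v_k^p\notin I_{k+1}$ is a correct elaboration of what the paper asserts without detail.
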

\begin{proof}
    We reduce from \textsc{pseudoline stretchability}~\citep{s-csgtp-09}. Let $I$ be an instance of \textsc{pseudoline stretchability} with at least five lines. To construct a hypergraph $H$, we first add to $H$ all the intersection points of $I$ as vertices.
    For each pseudoline with the intersection points $v_1, v_2,v_3,v_4,\dots,v_{t-2}, $ $v_{t-1},v_t$ in this order, we introduce the further vertices $v_{i,1},v_{i,2}$ for all $1\le i\le t-1$. We then add the hyperedges $\{v_1,v_{1,1},v_{1,2}\}$, $\{v_t,v_{t-1,1},v_{t-1,2}\}$, and $\{v_i,v_{i-1,1},v_{i-1,2},v_{i,1},v_{i,2}\}$ for all $2\le i\le t-1$. By the construction of $H$, in any representation $v_1,v_2,\dots,v_t$ must be collinear and appear in the correct order (\cref{fig:nonstrHard}(d)--(f)).
\end{proof}
\begin{figure}[ht]
    \centering    \includegraphics[width=.6\linewidth]{./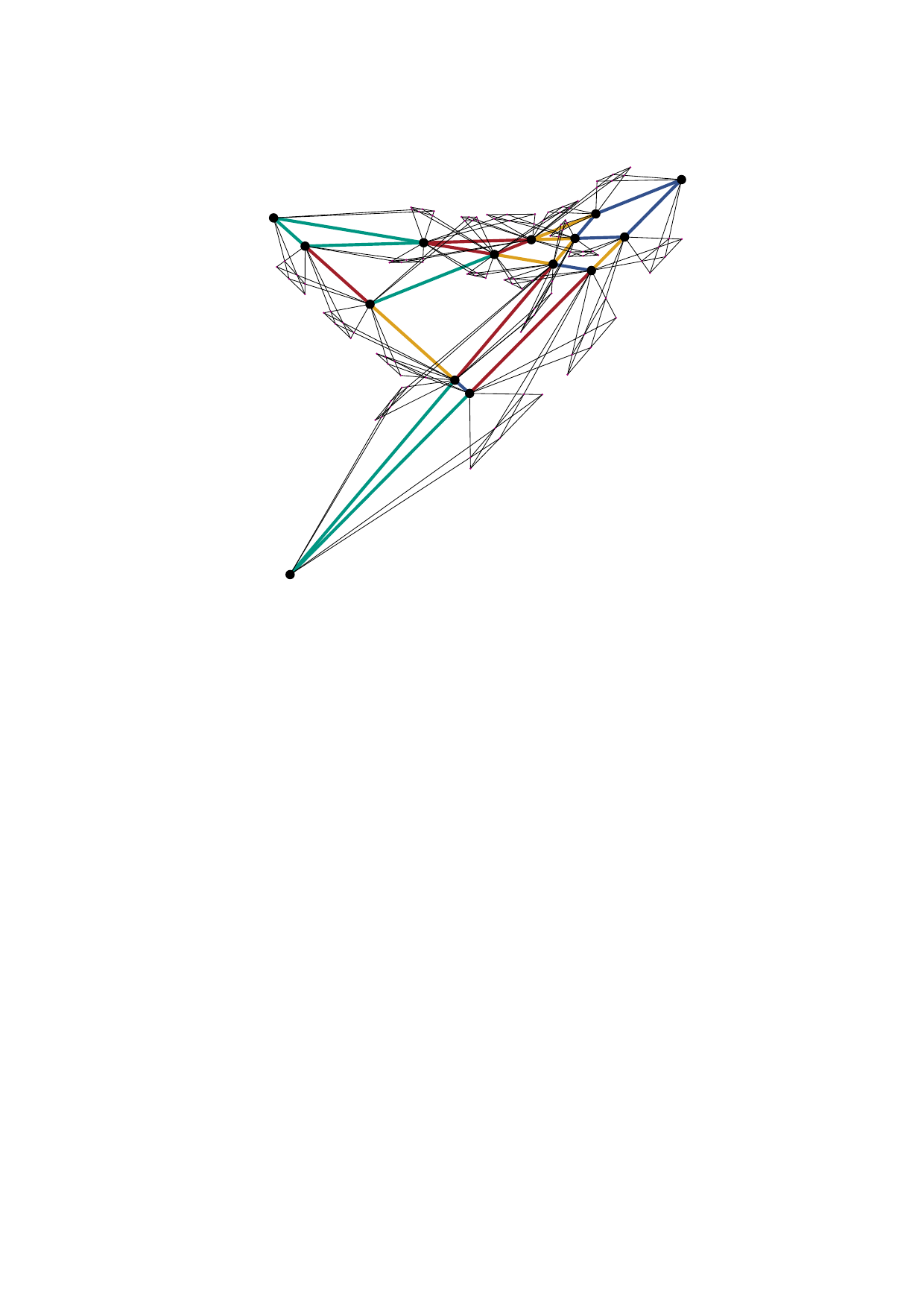}
    \caption{A strict segment representation of the hypergraph resulting from the reduction in \cref{thm:erstrictlinesegments} applied to the pseudoline arrangement in \cref{fig:nonstrHard}(a).}
    \label{fig:erstrictlinesegments}
\end{figure}

By careful use of the Pappus gadget, we can prove $\exists\RR$-hardness for strict segment representations, even when the hypergraphs are of rank $3$ and maximum degree $12$.
\begin{theorem}\label{thm:erstrictlinesegments}
    It is $\exists\RR$-hard to decide whether a rank-$3$ max-degree-$16$ hypergraph has a strict segment representation.
\end{theorem}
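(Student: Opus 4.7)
My plan is to reduce from \textsc{pseudoline stretchability}, following the template of \cref{thm:erlinesgmentsrank3} but replacing its overlapping hyperpath (whose consecutive triples share two vertices and are therefore forbidden in the strict setting) by a linear combination of a non-overlapping \emph{skip-path} for order enforcement together with Pappus gadgets for collinearity.

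Given a simple pseudoline arrangement $I$, I construct a rank-3 hypergraph $H$ whose vertex set contains all intersection points of $I$. For each pseudoline $\ell$ with consecutive intersection points $v_1, v_2, \ldots, v_t$, I add
\begin{itemize}
    \item the \emph{path hyperedges} $\{v_1,v_2,v_3\}, \{v_3,v_4,v_5\}, \{v_5,v_6,v_7\}, \ldots$ along $\ell$ (with a minor adjustment for the last vertex when $t$ is even, e.g.\ by adding a single dummy vertex forming a triple with $v_{t-1}$ and $v_t$), so that any two consecutive triples share exactly one vertex; and
    \item for every $i \in \{1, \ldots, t-2\}$, a fresh copy of the Pappus gadget $H_P$ with anchors $(v_i, v_{i+1}, v_{i+2})$.
\end{itemize}
Each Pappus gadget contributes six new fixer vertices (each of degree~$3$) and eight size-$3$ hyperedges, so the rank stays at $3$, the construction is polynomial, and $H$ is linear by design.

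For the forward direction, given a stretched arrangement equivalent to $I$, I place each $v_i$ at its stretched position, draw the path hyperedges as subsegments of $\ell$'s stretched line, and repeatedly apply \cref{lemma:pappusfreedom} to realise each Pappus gadget with its fixers placed off that line in a way that avoids all previously placed points and lines, preventing any unwanted incidence or segment overlap. For the backward direction, \cref{thm:pappusthm} forces every anchor triple $(v_i,v_{i+1},v_{i+2})$ to be collinear; since consecutive triples share two anchors, all $v_i$ of $\ell$ lie on a common line $L_\ell$. The path hyperedges then pin down the order along $L_\ell$: each triple $\{v_{2k-1},v_{2k},v_{2k+1}\}$ is a segment with $v_{2k}$ strictly between $v_{2k-1}$ and $v_{2k+1}$, and strictness forbids two consecutive path segments from extending on the same side of their shared vertex on $L_\ell$ (otherwise they would overlap on a non-degenerate subsegment), so the order of intersections along $L_\ell$ matches the order along $\ell$ up to global reversal, yielding a line arrangement combinatorially equivalent to $I$.

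For the parameter bounds, rank is $3$ by inspection. For the maximum degree, each intersection point $v$ lies on exactly two pseudolines; on each pseudoline it appears as an anchor of at most three Pappus gadgets (contributing $3 \cdot 2 = 6$ hyperedges) and in at most two path hyperedges, for $8$ incidences per pseudoline and thus at most $16$ in total, while all auxiliary fixer vertices have degree~$3$. The main obstacle I anticipate is the backward direction: carefully verifying that strictness of the skip-path segments really does rule out interleaved orderings of the $v_i$ along $L_\ell$ that would break combinatorial equivalence with $I$, and simultaneously that Pappus hyperedges from different anchor triples along the same or crossing pseudolines never accidentally align on a forbidden line in the forward construction---the latter is exactly what \cref{lemma:pappusfreedom} is set up to handle.
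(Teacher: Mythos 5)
Your reduction is essentially the paper's: Pappus gadgets anchored at every consecutive triple $(v_i,v_{i+1},v_{i+2})$ to force collinearity, plus a chain of hyperedges along each pseudoline to force the order, with the same degree accounting ($6$ Pappus incidences $+$ $2$ chain incidences per pseudoline, times two pseudolines per crossing, giving $16$). The only substantive difference is the order-enforcing chain: the paper uses size-$2$ hyperedges $\{v_1,v_2\},\{v_2,v_3\},\dots,\{v_{t-1},v_t\}$, whereas you use the skip triples $\{v_1,v_2,v_3\},\{v_3,v_4,v_5\},\dots$ so that consecutive chain members share exactly one vertex.

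That difference creates a real gap in your backward direction, precisely at the spot you flagged as the ``main obstacle.'' Your argument correctly shows that for an \emph{interior} skip triple $T_k=\{v_{2k-1},v_{2k},v_{2k+1}\}$ both odd-indexed vertices are forced to be endpoints of its segment (each is shared with a neighbouring skip segment on the same line $L_\ell$, and strictness puts the two segments on opposite sides), so $v_{2k}$ lies strictly between them and the interior order is pinned down. But the first and last triples each have a free end: in $T_1=\{v_1,v_2,v_3\}$ only $v_3$ is forced to be an endpoint, and nothing determines whether $v_1$ or $v_2$ is closer to $v_3$. So the realized line arrangement may differ from $I$ by swapping the first (or last) two crossings on a pseudoline --- a triangle mutation --- and a mutation of a non-stretchable arrangement can be stretchable, so ``$H$ representable $\Rightarrow I$ stretchable'' is not established. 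Note also that the natural patches inside your framework fail: adding $\{v_1,v_2\}$ or the shifted triple $\{v_2,v_3,v_4\}$ creates a pair of hyperedges sharing two vertices, whose segments would then share two points, violating strictness outright. The paper's size-$2$ chain sidesteps all of this, since $\{v_i,v_{i+1}\}$ and $\{v_{i+1},v_{i+2}\}$ share exactly one vertex and force every $v_{i+1}$ to separate $v_i$ from $v_{i+2}$, which fixes the complete order including both ends. The rest of your proof (forward direction via \cref{lemma:pappusfreedom}, linearity, rank and degree bounds) matches the paper and is fine.
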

\begin{proof}
    Consider an instance of \textsc{pseudoline stretchability}. We build a hypergraph $H$ with intersection points as vertices. For each pseudoline let its intersection points be $v_1, v_2,\dots,v_t$. We need to force (1) all these points collinear, and (2) fix the correct order.
    For (1), we add for each triple $v_i,v_{i+1},v_{i+2}$ ($i=1, \dots, t-2$) a new Pappus gadget with the triple as its anchors.
    For (2), we  add the hyperedges $\{v_1,v_2\},\{v_2,v_3\},\dots,\{v_{t-1},v_t\}$.
    We claim that $I$ is stretchable if and only if $H$ has a strict segment representation, see \cref{fig:erstrictlinesegments}.

    ``$\Rightarrow$'': Consider the placement of intersection points as they appear in the stretched pseudoline arrangement. Every size-two hyperedge can be represented by drawing a line segment between their corresponding vertices. We also need to represent the Pappus gadgets, which is possible because of the freedom of the Pappus gadget (see \cref{lemma:pappusfreedom}).

    ``$\Leftarrow$'' We built $H$ such that in a strict segment representation the intersection points of a polyline are collinear and ordered correctly, making $I$ stretchable.
\end{proof}

\subparagraph*{Solvable cases.}
The first result is a direct consequence of \cref{obs:linerank-2} and by the fact that rank-2 hypergraphs are linear by our assumptions.
\begin{corollary}\label{corr:strictlinesegmentpoly}
    Every rank-2 hypergraph and every linear max-degree-2 hypergraph has a strict and non-strict segment representation.
\end{corollary}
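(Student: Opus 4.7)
The plan is to bootstrap from \cref{obs:linerank-2}, which already provides a strict line representation $(\alpha,\beta)$ for both hypergraph classes in the statement. Note that a rank-$2$ hypergraph is linear under our blanket assumption that no two hyperedges share the same vertex set, so \cref{obs:linerank-2} does apply. The core idea is that truncating each infinite line $\beta(e)$ to a line segment that still contains precisely the points $\alpha(v)$ with $v\in e$ preserves the defining equivalence $v\in e\Leftrightarrow\alpha(v)\in\beta(e)$, and converts the line representation directly into a strict segment representation.

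Concretely, for every hyperedge $e$ of cardinality at least two I would replace $\beta(e)$ by the unique shortest segment on the line $\beta(e)$ that covers every image $\alpha(v)$, $v\in e$. Well-definedness follows from the collinearity already guaranteed by the line representation, and no spurious incidence is introduced because $\alpha(u)\notin \beta(e)$ already holds for every $u\notin e$. For each singleton hyperedge $e=\{v\}$ I would instead take a sufficiently short subsegment of $\beta(e)$ through $\alpha(v)$, chosen to avoid the finitely many other images $\alpha(u)$ and to be distinct from all previously selected segments; this is always possible, and it preserves the injectivity of $\beta$.

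The resulting mapping is a segment representation by construction, and it is strict because two distinct lines in the plane meet in at most one point, so the segments lying on them share at most one point too. Since every strict segment representation is, in particular, a (non-strict) segment representation, both variants are obtained simultaneously. No substantial obstacle arises here: the corollary is essentially a direct consequence of \cref{obs:linerank-2}, as indicated by the surrounding remark, and the only subtlety is the routine bookkeeping required to handle singleton hyperedges while keeping $\beta$ injective.
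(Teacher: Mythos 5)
Your proposal is correct and follows essentially the same route as the paper, which presents this corollary as a direct consequence of \cref{obs:linerank-2} (the general-position line representation) combined with the fact that rank-2 hypergraphs are linear; you merely spell out the truncation of lines to segments, which the paper leaves implicit. The handling of singleton hyperedges and the remark that distinct lines yield segments sharing at most one point are exactly the routine details needed, so there is nothing to object to.
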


As we have seen, testing whether a max-degree-2 hypergraph has a segment representation is $\exists\RR$-hard (\cref{thm:erlinesgmentsmdeg2}). However, we can at least characterize all rank-3 max-degree-2 hypergraphs that have a segment representation by a set of forbidden subhypergraphs:
\begin{itemize}\sloppy
    \item The \emph{rigid triangle} consists of vertices $a,b,c,d$ and hyperedges $e_1=\{a,b,c\}$ $e_2=\{b,c,d\}$, and $e_3=\{a,d\}$. The hyperedge $e_3$ can furthermore also contain another vertex $f$, i.e., $e_3=\{a,d,f\}$. See \cref{fig:fp}(a)--(b).
    \item The \emph{rigid parallel 2-path} consists of the vertices $a,b_1,b_2,c_1,c_2,d$, and the hyperedges $e_1=\{a,b_1,c_1\}$, $e'_1=\{b_1,c_1,d\}$, $e_2=\{a,b_2,c_2\}$, $e'_2=\{b_2,c_2,d\}$. See \cref{fig:fp}(c).
\end{itemize}

\begin{figure}[ht]
    \centering    \includegraphics[width=\textwidth]{./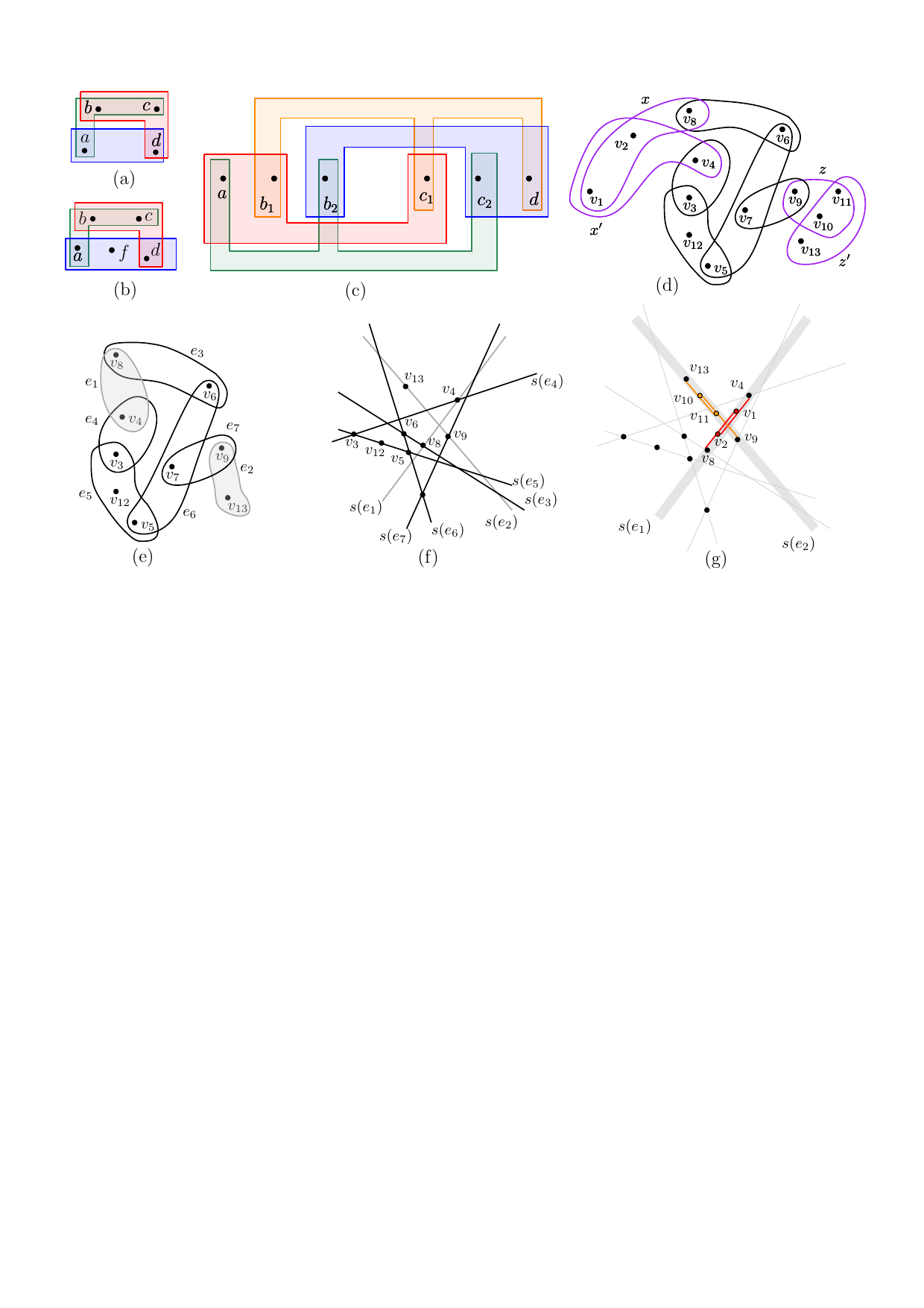}
    \caption{(a)--(b) Illustration for rigid triangles. (c) A rigid parallel 2-path. (d) Illustration for $E_b$ in purple where the sibling pairs are $x,x'$ and $z,z'$. (e) Illustration for $E_c$, where the hyperedges $((x\cup x')\setminus (x\cap x'))$ and $((z\cup z')\setminus (z\cap z'))$ corresponding to sibling pairs are shown in gray. (f) A segment representation for $E_c$, where sibling pairs $x,x'$ and $z,z'$ are mapped to $s(e_1)$ and $s(e_2)$, respectively. (g) Splitting $s(e_1)$ and $s(e_2)$ to create $\beta(x),\beta(x')$ and $\beta(z),\beta(z')$, respectively.}
    \label{fig:fp}
\end{figure}
\begin{restatable}{theorem}{polylinesegmentrankThreedegreeTwo}\label{thm:polylinesegmentrank3degree2}
    A rank-3 max-degree-2 hypergraph has a segment representation if and only if it does not contain a set of hyperedges that together form a rigid triangle or rigid parallel 2-path.
\end{restatable}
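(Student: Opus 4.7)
The plan is to prove each direction separately, with the main effort going into the backward direction.

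For the forward direction, I would argue by contradiction, assuming a segment representation $(\alpha,\beta)$ exists. Given a rigid triangle $e_1=\{a,b,c\}$, $e_2=\{b,c,d\}$, $e_3\supseteq\{a,d\}$, the segments $\beta(e_1)$ and $\beta(e_2)$ share the two distinct points $\alpha(b),\alpha(c)$ and hence lie on a common line $L$; consequently $\alpha(a),\alpha(d)\in L$. The non-incidences $d\notin e_1$ and $a\notin e_2$ then force, via a short case analysis on the linear order of the four points on $L$, that $\alpha(b)$ and $\alpha(c)$ lie strictly between $\alpha(a)$ and $\alpha(d)$. Since $\beta(e_3)\subseteq L$ contains $\alpha(a),\alpha(d)$, it must also contain $\alpha(b),\alpha(c)$, contradicting $b,c\notin e_3$. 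For the rigid parallel 2-path, each of the two sibling pairs independently forces $\alpha(a),\alpha(d)$ onto a common line, and these lines must coincide, placing all six vertices on one line $L$; an analogous positional argument forces the middle vertices of one pair into the forbidden sub-segment of the other, again a contradiction.

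For the backward direction, my plan is a constructive merge-split algorithm. First I would observe via a short counting argument (using max-degree~2 and rank~3) that each hyperedge has at most one sibling, so the sibling pairs form a disjoint collection $P_1,\ldots,P_k$. For each $P_i=(e_i,e'_i)$ let $\hat e_i := e_i\cup e'_i$ be the combined edge, and let $H^\ast$ be the auxiliary hypergraph obtained by replacing each sibling pair by its combined edge. The key structural claim is that $H^\ast$ is linear: since the two middle vertices of a sibling pair have their degree-$2$ budget fully used inside that pair, any two edges of $H^\ast$ sharing two vertices must share them among the ``free endpoints'' of the involved pair(s), which is exactly the rigid triangle pattern (combined edge vs.\ regular edge) or the rigid parallel 2-path pattern (two combined edges). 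Both are forbidden by assumption.

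Next, I would represent $H^\ast$ by a generic-position argument: assign pairwise non-parallel lines to its edges with no three concurrent, place each degree-$2$ vertex at the corresponding line intersection, and place each degree-$1$ vertex at a generic position on its line, avoiding the finitely many positions on that line that lie on some other edge's line. Restricting each line to a minimal segment containing exactly its assigned vertices yields a segment representation of $H^\ast$. The final step is to expand each combined edge $\hat e_i$ back into two overlapping sibling segments: since the two middle vertices of $P_i$ have degree $1$ in $H^\ast$ and are therefore freely placeable on $\hat e_i$'s line, I would position them strictly between the two free endpoints of $P_i$, so that $\beta(\hat e_i)$ splits into sub-segments $\beta(e_i)$ and $\beta(e'_i)$ with the correct containment pattern.

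The main obstacle is making the split step robust against external constraints on the free endpoints, in particular when such an endpoint is pinned at a specific location by incidence with an external edge. Since the middle vertices remain externally unconstrained, placing them between the (possibly pinned) free endpoints still succeeds. The $(3,2)$ sibling sub-case (where $e'_i$ has size $2$ and so only one ``free endpoint'' exists) and the case where a free endpoint has external degree need small modifications, but no new ideas beyond the generic-position argument above.
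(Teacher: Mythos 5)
Your proposal is correct and follows essentially the same route as the paper's proof: the same positional collinearity argument for necessity, and for sufficiency the same construction of pairing sibling hyperedges, representing the merged hypergraph by a generic segment arrangement with degree-2 vertices at intersections, and then splitting each merged segment into two overlapping siblings with the shared vertices placed between the free endpoints. Your explicit isolation of the linearity of the merged hypergraph $H^\ast$ is a slightly cleaner packaging of the verification the paper does at the end, but it is the same idea.
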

\begin{proof}
    The forward direction is clear as rigid triangles and rigid parallel 2-paths themselves do not have segment representation. For a rigid triangle (\cref{fig:fp}(a)--(b)), a segment representation of the hyperedges $e_1$ and $e_2$ forces  $a,b,c,d$ to be collinear where $b$ and $c$ must appear between $a$ and $d$. Hence, $e_3$ cannot be represented without an unnecessary adjacency.
    For a rigid parallel 2-path (\cref{fig:fp}(c)), a segment representation of the hyperedges $e_i$ and $e'_i$, where $1\le i\le 2$, forces  $a,b_i,c_i,d$ to be collinear where $b_i$ and $c_i$ must appear between $a$ and $d$.
    Since a segment representation of the hyperedges $e_1$ and $e_2$ forces $a$ to lie between $b_1,c_1$ and $b_2,c_2$, we cannot find a placement for $d$ to satisfy the constraints imposed by $e_i$ and $e'_i$.

    Assume now that we are given a hypergraph $H=(V,E)$ without any of the two forbidden substructures. We show how to construct a segment representation.
    Let $E_a\subseteq E$ be defined as
    \[E_a=\{e\in E\mid \forall e'\in E:e'\ne e\Rightarrow |e\cap e'|\le 1\},\]
    and let $E_b=E\setminus E_a$. In other words, a hyperedge belongs to $E_a$ if it has at most one vertex in common with any other edge in the hypergraph, and $E_b$ consists of the remaining set of hyperedges.

    Since $H$ is a max-degree-2 and rank-3 hypergraph, we can observe some structural properties of $E_b$. For each $e\in E_b$ there exists exactly one other $e'\in E_b$ with $|e\cap e'|=2$. Hence, we define $\text{sib}(e)=e'$ and partition $E_b$ into pairs of siblings (\cref{fig:fp}(d)).
    Now define the edge set $E_c$ which consists of $E_a$ plus for each sibling pair $\{x,x'\}$ it contains the hyperedge $p_{x,x'}=(x\cup x')\setminus (x\cap x')$ (\cref{fig:fp}(e)).
    Next, consider a line arrangement of $|E_c|$ line segments in general position such that each pair of line segments crosses exactly once.
    Assign to each hyperedge $e\in E_c$ a distinct  line segment $s(e)$ from this arrangement. If $e=p_{x,x'}$ for a sibling pair $x,x'$, we set $s(x)=s(x')=s(e)$.
    For $e\in E_a$ we set $\beta(e)=s(e)$. Next, for each vertex $v\in V$ that is not incident to two hyperedges of a sibling pair, we let $\alpha(v)$ be the intersection point of the line segments $s(e)$ and $s(e')$ of its two incident hyperedges $e$ and $e'$ (\cref{fig:fp}(f)).

    We now find the representations for the sibling pairs. It suffices to consider a sibling pair $p_{x,x'}$ such that $|x\cup x'|=4$, otherwise the construction is trivial.
    For each sibling pair $x,x'$ we cover $s(p_{x,x'})$ by $\beta(x)$ and $\beta(x')$ as follows.  Firstly, $\beta(x)$ and $\beta(x')$ need to share a non-empty segment that will contain the two points $\alpha(y)$ and $\alpha(y')$ for $y,y'\in x\cap x'$ (\cref{fig:fp}(g)). Since the line segments are in general position, $\alpha(y)$ and $\alpha(y')$ can be chosen such that no other line segment besides $\beta(p_{x,x'})$ contains them. Next, if for example $x\setminus x'=\{v\}$, then we define $\beta(x)$ such that it contains $\alpha(v)$ and $\beta(x')$ such that it does not contain $\alpha(v)$. If $x'\setminus x=\{v'\}$ we define $\beta(x')$ such that it contains $\alpha(v')$ and $\beta(x)$ such that it does not contain $\alpha(v')$. Notice that all the requirements above are possible by splitting $s(p_{x,x'})$ into three consecutive parts, the first being covered by $\beta(x)$ and containing $\alpha(v)$, the second being covered by both $\beta(x)$ and $\beta(x')$ and containing $\alpha(y)$ and $\alpha(y')$, and the third being covered by $\beta(x')$ and containing $\alpha(v')$.
    Suppose for a contradiction that the above construction generates unnecessary adjacencies among the four vertices on $\beta(p_{x,x'})$. The sibling pair enforces $y,y'$ to lie between $v,v'$. Since $v$ and $v'$ each are of degree 2, an unnecessary adjacency can appear only if there exists another set $S$ of hyperedges that force a set of vertices to be collinear with $v$ and $v'$. If $|S|=1$, then we obtain a rigid triangle. If $|S|>1$, then we obtain a rigid parallel 2-path. Since $H$ does not contain these structures, we obtain a correct representation.
\end{proof}

Notice that testing whether $H$ contains a rigid triangle or a rigid parallel 2-path can be done in polynomial time, so we have identified a problem variant that is polynomial-time solvable.
We do not know the complexity for hypergraphs with maximum degree 2 and rank 4.

\section{Representations without bends and without crossings}\label{section:crossfreerepresentations}
In this section, we give some results on deciding the existence of crossing-free representations, which are summarized in \cref{table:resultswocrossings}. Again, the equivalence between strict and non-strict is because of \cref{observation:strictvnonstrict}.

\subsection{Complexity results for lines}\label{section:crfreerepresentationslines}
We have two positive results for rank-2 and max-degree-2 hypergraphs, respectively. The first follows due to an easy case distinction, the second by an equivalence of crossing-free line representability of a linear hypergraph $H$ (non-linear hypergraphs are not line representable) and its hyperedge intersection graph being a complete $k$-partite graph.
\begin{restatable}{theorem}{crossfreelines}\label{thm:crossfreelines}
    We can decide in polynomial time whether a rank-$2$ hypergraph has a crossing-free line representation.
\end{restatable}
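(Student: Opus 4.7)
The plan is to give a structural characterization of the rank-$2$ hypergraphs admitting a crossing-free line representation, and to turn it into a polynomial-time algorithm by a short case distinction. First note that every rank-$2$ hypergraph is automatically linear, so by \cref{observation:strictvnonstrict} the strict and non-strict versions coincide. The key combinatorial observation is that two distinct lines in $\RR^2$ either meet in exactly one point or are parallel; hence in any crossing-free line representation the hyperedges split into \emph{parallel classes}, where two hyperedges belong to the same class iff they share no vertex. In particular ``share no vertex'' must be a transitive relation on hyperedges, so $H$ cannot contain a $P_3+K_2$ pattern (two hyperedges sharing a vertex plus a third disjoint from both), and this I can test in polynomial time.

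When transitivity holds, the plan is to split into three cases by class sizes in the graph $G$ of size-$2$ hyperedges. (i) If some class contains three pairwise disjoint hyperedges, then any hyperedge in a different class would have to share a vertex with each of them, which is impossible for a size-$2$ edge, so $G$ must be a matching. (ii) If every class has size at most two and some class $\{e_1,e_2\}$ has size two, then every other hyperedge must meet both $e_1$ and $e_2$, and hence lies within their at most four endpoints, so $G$ has at most four non-isolated vertices. (iii) If every class has size one, then all hyperedges pairwise share a vertex, and a short classical argument forces $G$ to be either a star $K_{1,n}$ or the triangle $K_3$. Size-$1$ hyperedges at a vertex $v$ simply play the role of additional ``lines through $v$'' and are incorporated into these cases without trouble.

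Each of the possible structures admits a direct geometric construction: matchings via parallel lines with distinct intercepts, stars via concurrent lines of distinct slopes, triangles via three generic lines, and each of the finitely many possible graphs on at most four non-isolated vertices by an explicit coordinate placement (for example a parallelogram for $C_4$ or a triangle with a suitable transversal for the paw). The main obstacle is that the combinatorial condition alone is not sufficient for realizability: the two exceptional graphs $K_4$ and $K_4-e$ satisfy it but cannot be realized, because the parallelism constraints on their three classes force two ``diagonals'' of the four points to be parallel, which collapses three of them onto a single line and thereby creates an unwanted incidence. After an explicit rejection of these two exceptions (trivial to test once $|V(G)|\le 4$), every remaining structure can be realized, and the algorithm runs in polynomial time since all tests and constructions are straightforward.
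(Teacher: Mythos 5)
Your overall strategy is genuinely different from the paper's: the paper argues by a direct case distinction on vertex degrees (all degrees one; a degree-$2$ vertex; a degree-$\ge 3$ vertex), whereas you first extract the necessary condition that disjointness of hyperedges be transitive (equivalently, that the hyperedge intersection graph be complete multipartite, since non-intersection of lines must coincide with parallelism) and then analyse the parallel classes. That framework is sound, and your cases (i) and (iii) are fine. However, there is a concrete error in your list of exceptions in case (ii): $K_4-e$ (the diamond) \emph{does} admit a crossing-free line representation, so your algorithm would reject a yes-instance. Take $\alpha(a)=(0,0)$, $\alpha(b)=(1,0)$, $\alpha(c)=(0,1)$, $\alpha(d)=(1,1)$ and represent $\{a,b\},\{c,d\},\{a,c\},\{b,d\},\{b,c\}$ by the lines $y=0$, $y=1$, $x=0$, $x=1$ and $x+y=1$, respectively; one checks that every incidence and every line intersection is exactly as required. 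The point is that in $K_4-e$ only two of the three classes are matchings of size two; the two parallelism constraints $\{a,b\}\parallel\{c,d\}$ and $\{a,c\}\parallel\{b,d\}$ force a parallelogram, but the fifth hyperedge $\{b,c\}$ forms a \emph{singleton} class, so it is just one diagonal of that parallelogram and meets each side only at its own endpoints $b$ or $c$. Your argument that ``two diagonals of the four points are forced to be parallel'' requires both diagonals to be present and to form a third parallel class of size two, which happens only for $K_4$ itself. So $K_4$ is the unique obstruction in case (ii), not $K_4$ and $K_4-e$.

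This matters because in your approach the entire burden of correctness rests on the concluding assertion that ``every remaining structure can be realized,'' which you leave unverified---and since one of your two claimed exceptions is in fact realizable, the finite check over graphs with at most four non-isolated vertices must actually be carried out rather than asserted. The gap is repairable (the case analysis is finite), but as written the characterization, and hence the algorithm, gives a wrong answer on inputs whose size-two hyperedges form a diamond. For comparison, the paper sidesteps the enumeration of small graphs entirely by branching on whether some vertex has degree at least $3$ (forcing a pencil of concurrent lines) or exactly $2$ (reducing to a simple condition on how the remaining hyperedges meet the two lines through that vertex).
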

\begin{proof}
    Let $H$ be a hypergraph. We assume $H$ is linear, otherwise there is no such representation.
    We consider different cases.
    \begin{itemize}
        \item If $H$ only contains vertices of degree one then $H$ has a crossing-free line representation with parallel lines as it consists of pairwise independent hyperedges.
        \item Otherwise, let $v$ be a vertex of degree $d>1$ and let $e_1,\dots,e_d$ be its incident hyperedges. We again consider two cases.
              \begin{itemize}
                  \item If $d=2$ then $H$ has a crossing-free line representation if and only if the set of  hyperedges $E' = E\setminus\{e_1,e_2\}$ can be represented as a set of parallel lines without creating unnecessary adjacencies with the representations of $e_1$ and $e_2$. This implies that no two hyperedges in $E'$ share a common vertex, and exactly one of the following conditions hold: (a) Every hyperedge of $E'$ has a vertex in common with $e_1$. (b) Every hyperedge of $E'$ has a vertex in common with $e_2$. (c) Every hyperedge of $E'$ has a vertex in common with $e_1$ and a vertex in common with $e_2$.

                        For (a) and (b), the hyperedges of $E'$ can be drawn parallel to $e_1$ and $e_2$, respectively. For (c), the hyperedges of $E'$ can be drawn with parallel lines where all of them intersect the lines for $e_1$ and $e_2$.
                  \item If $d\ge 3$ then $H$ cannot contain any hyperedge $e\not\in\{e_1,\dots,e_d\}$, because the representation of $e$ would intersect all the other lines implying a rank of at least 3. $H$ can be represented using a set of lines intersecting at a common point $\alpha(v)$.
              \end{itemize}
    \end{itemize}
    Each of the above cases can be processed by iterating over the hyperedges a polynomial number of times. Consequently, the overall running time is polynomial.
\end{proof}

\begin{restatable}{theorem}{crossfreelinesdegtwo}\label{thm:crossfreelinesdeg2}
    Let $H=(V,E)$ be a max-degree-2 hypergraph. Then, in polynomial time, one can decide whether $H$ admits a crossing-free line representation.
\end{restatable}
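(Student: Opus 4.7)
The plan is to reduce \cref{thm:crossfreelinesdeg2} to checking two polynomial-time-verifiable conditions: (1) $H$ is linear, and (2) the hyperedge intersection graph $G$ of $H$ is complete multipartite. By \cref{observation:strictvnonstrict}, if $H$ is not linear then it admits no line representation, so I may assume linearity. Condition~(2) can be tested in polynomial time by verifying that the complement $\overline G$ is a disjoint union of cliques, equivalently $P_3$-free; the algorithmic claim will then follow from the characterization.

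For the necessity direction, in any crossing-free line representation $(\alpha,\beta)$, two distinct lines $\beta(e)$ and $\beta(e')$ either cross at a unique point or are parallel. Hence the relation ``$e$ and $e'$ share no vertex'' coincides with ``$\beta(e)\parallel\beta(e')$'', and since parallelism is an equivalence relation, the non-adjacency relation on $E$ must itself be an equivalence relation. This is precisely the condition that $G$ is complete multipartite.

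For sufficiency I would give a direct construction. Let $P_1,\dots,P_k$ be the color classes of $G$. Choose $k$ pairwise distinct slopes and realize each $e\in P_i$ as a line of the $i$-th slope, with distinct offsets selected generically so that no three lines from three different classes are concurrent. For a degree-$2$ vertex $v\in e\cap e'$, linearity and complete multipartiteness of $G$ imply that $e$ and $e'$ lie in different parts, so $\beta(e)\cap\beta(e')$ is a single point, which is set to $\alpha(v)$; genericity ensures these assignments give pairwise distinct points. Vertices of degree~$1$ are placed on their unique incident line, avoiding the finitely many already-used intersection points and existing line incidences; isolated vertices are placed off all lines.

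The main subtlety I anticipate is the genericity argument: one must ensure that the finite family of unwanted incidences (three-line concurrencies, coincidences among $\alpha$-images, and spurious incidences of low-degree vertices on lines they should avoid) can all be avoided simultaneously. Each such event is a measure-zero condition on the continuously many slope/offset/placement parameters, so a generic choice suffices. The max-degree-$2$ hypothesis is used precisely here: no vertex is forced to lie on three lines simultaneously, so no triple concurrency is imposed by $H$ itself, and the generic construction is never overconstrained.
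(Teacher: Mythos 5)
Your proposal is correct and follows essentially the same route as the paper: assume linearity, then characterize representability by the hyperedge intersection graph being complete multipartite, tested via cliques in the complement. You merely spell out (via the parallelism-is-an-equivalence-relation argument and the explicit generic slope construction) the two directions that the paper compresses into a single ``clearly''.
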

\begin{proof}
    If $H$ is not linear, then it does not have a crossing-free line representation. Thus, assume that $H$ is linear.
    Any two hyperedges that share a vertex must intersect. Since the representation is crossing free, every intersection point must correspond to a vertex.
    Therefore, we construct the graph $G$ having as vertex set $E$ and distinct $e,e'\in E$ are adjacent if and only if they share a vertex. 
    
    In a crossing-free  line representation, a maximal set of parallel lines must intersect all the other remaining lines, and every intersection determines a vertex. Therefore, if $H$ admits a crossing-free line representation, then $G$ is complete $k$-partite for some $k\in\mathbb{N}$.
    
   Assume now that $G$ is complete $k$-partite for some $k\in\mathbb{N}$, which can be tested in polynomial time by checking whether each connected component in the complement graph of $G$ is a clique. We can represent $H$ by choosing $k$ representative lines in general position,  and then creating a set of parallel lines for each of these representative lines whose cardinality matches the size of the corresponding partite set in $G$.
\end{proof}

\subsection{Line segments}\label{section:crfreerepresentationslinesegments}
For line segments, we first present hardness results and then cases that we solve, in particular this will include classes of hypergraphs that always have a crossing-free segment representation.
\subparagraph*{Hardness results.}
The reduction in the proof of \cref{thm:erlinesgmentsrank3} constructs a hypergraph that admits a crossing-free segment representation if and only if the corresponding pseudoline arrangement is stretchable. Thus, the same reduction gives the following result.
\begin{corollary}\label{cor:ercrossingfreelinesegmentsrank3}
    It is $\exists\RR$-hard to decide whether a rank-$3$ max-degree-$6$ hypergraph has a crossing-free segment representation.
\end{corollary}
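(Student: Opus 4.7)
The plan is to reuse the reduction from \cref{thm:erlinesgmentsrank3} verbatim and verify that the segment representation it produces is automatically crossing-free, so no new gadgetry is needed. Given an instance $I$ of \textsc{pseudoline stretchability}, we build the same rank-$3$ max-degree-$6$ hypergraph $H$ whose vertices are the intersection points of $I$ and whose hyperedges are the consecutive triples $\{v_i, v_{i+1}, v_{i+2}\}$ along each pseudoline. The only substantive step beyond \cref{thm:erlinesgmentsrank3} is the crossing analysis on the forward side.

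For the forward direction, suppose $I$ is stretchable and consider the straightened arrangement. Each pseudoline $\pi$ becomes a straight segment, and the hyperedges along $\pi$ are represented by sub-segments of it. Two hyperedges on the same pseudoline $\pi$ can overlap only on a sub-segment whose interior intersection points are among their shared vertices: consecutive triples $\{v_i,v_{i+1},v_{i+2}\}$ and $\{v_{i+1},v_{i+2},v_{i+3}\}$ share $v_{i+1}$ and $v_{i+2}$, triples at distance two share only the endpoint vertex $v_{i+2}$, and triples farther apart are disjoint. Two hyperedges from distinct pseudolines $\pi\ne\pi'$ can meet only at the unique intersection point $p=\pi\cap\pi'$; by the consecutive-triple construction, the only intersection points on $\pi$ that lie on the sub-segment for $\{v_i,v_{i+1},v_{i+2}\}$ are $v_i$, $v_{i+1}$, $v_{i+2}$ themselves, so $p$ lies in that sub-segment iff $p\in\{v_i,v_{i+1},v_{i+2}\}$. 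Applying the same observation to $\pi'$, any shared point between two such hyperedges must be a common vertex, so the representation is crossing-free.

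The backward direction is immediate: every crossing-free segment representation is, in particular, a segment representation, so \cref{thm:erlinesgmentsrank3} implies that $I$ is stretchable. Together with the preserved rank and max-degree bounds and the polynomial-time construction, this transfers $\exists\RR$-hardness from \textsc{pseudoline stretchability}. There is no serious obstacle; the only non-routine point is the crossing analysis above, and its essence is simply that two pseudolines intersect in exactly one point, which in the consecutive-triple encoding lies on a hyperedge's sub-segment precisely when it is one of that hyperedge's vertices.
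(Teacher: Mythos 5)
Your proposal matches the paper exactly: the paper proves this corollary by the one-sentence observation that the reduction of \cref{thm:erlinesgmentsrank3} already produces a hypergraph admitting a crossing-free segment representation if and only if the arrangement is stretchable. Your explicit crossing analysis (two sub-segments meet only at the unique crossing of their pseudolines, which lies on a consecutive-triple sub-segment precisely when it is one of its vertices) correctly supplies the detail the paper leaves implicit, and the backward direction via \cref{thm:erlinesgmentsrank3} is fine.
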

The same holds for the reduction in the proof of \cref{thm:erlinesgmentsmdeg2}, which gives the following.
\begin{corollary}\label{cor:ercrossingfreelinesegmentsmdeg2}
    It is $\exists\RR$-hard to decide whether a rank-$5$ max-degree-$2$ hypergraph has a crossing-free segment representation.
\end{corollary}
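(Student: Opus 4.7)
The plan is to reuse verbatim the reduction from the proof of Theorem \ref{thm:erlinesgmentsmdeg2}, which transforms a pseudoline arrangement $I$ into a rank-$5$ max-degree-$2$ hypergraph $H$. Since any crossing-free segment representation of $H$ is in particular a segment representation, the backward direction of Theorem \ref{thm:erlinesgmentsmdeg2} already shows that the existence of such a representation forces the intersection points of each pseudoline to be collinear and correctly ordered, and hence $I$ to be stretchable. Only the forward direction needs a new argument: from a stretched realization of $I$ we must build a \emph{crossing-free} segment representation of $H$.

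For the forward direction, I set $\alpha(v_i)$ to the stretched position of each intersection point. On every stretched pseudoline $L$ that carries the vertices $v_1,\dots,v_t$ in this order, I place the auxiliary vertices $v_{i,1},v_{i,2}$ at two distinct points strictly between $\alpha(v_i)$ and $\alpha(v_{i+1})$ for $1\le i\le t-1$. The order along $L$ is then $\alpha(v_1),\alpha(v_{1,1}),\alpha(v_{1,2}),\alpha(v_2),\alpha(v_{2,1}),\alpha(v_{2,2}),\dots,\alpha(v_t)$. For each hyperedge I let $\beta(h_i)$ be the minimal sub-segment of $L$ containing all vertices of $h_i$: concretely, $\beta(h_i)$ runs from $\alpha(v_{i-1,1})$ to $\alpha(v_{i,2})$ for $2\le i\le t-1$, with the analogous truncated segments $[\alpha(v_1),\alpha(v_{1,2})]$ and $[\alpha(v_{t-1,1}),\alpha(v_t)]$ for the endpoint hyperedges $h_1$ and $h_t$.

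It remains to verify the crossing-free condition. Two hyperedges on the same pseudoline $L$: consecutive hyperedges $h_i,h_{i+1}$ share exactly the vertices $v_{i,1},v_{i,2}$, and $\beta(h_i)\cap\beta(h_{i+1})$ is the sub-segment from $\alpha(v_{i,1})$ to $\alpha(v_{i,2})$, which lies strictly between $\alpha(v_i)$ and $\alpha(v_{i+1})$ and contains no other vertex of $H$; non-consecutive hyperedges $h_i,h_j$ with $|i-j|\ge 2$ share no vertex, and $\beta(h_i),\beta(h_j)$ are disjoint because they are separated on $L$ by at least one intersection point $\alpha(v_k)$ that belongs to neither. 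Two hyperedges on different pseudolines $L_1,L_2$ lie on distinct straight lines that meet only at the single arrangement point $\alpha(v_k)=L_1\cap L_2$, so their segments share a point iff both contain $\alpha(v_k)$, iff both hyperedges contain $v_k$, iff they share a vertex. The only non-trivial bookkeeping is in the same-pseudoline case, where we must ensure that consecutive segments overlap only in the narrow region containing the two shared auxiliary vertices and nothing else; this is automatic from the placement of the auxiliaries strictly between adjacent intersection points.
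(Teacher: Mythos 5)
Your proposal is correct and takes exactly the approach the paper does: the paper's proof of this corollary is a one-line observation that the reduction from Theorem~\ref{thm:erlinesgmentsmdeg2} already yields a hypergraph admitting a crossing-free segment representation if and only if the arrangement is stretchable. Your write-up simply fills in the verification (minimal sub-segments, auxiliaries placed strictly between consecutive intersection points, simplicity of the arrangement ruling out concurrences) that the paper leaves implicit, and that verification is sound.
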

\begin{figure}[ht]
    \centering    \includegraphics[width=\linewidth]{./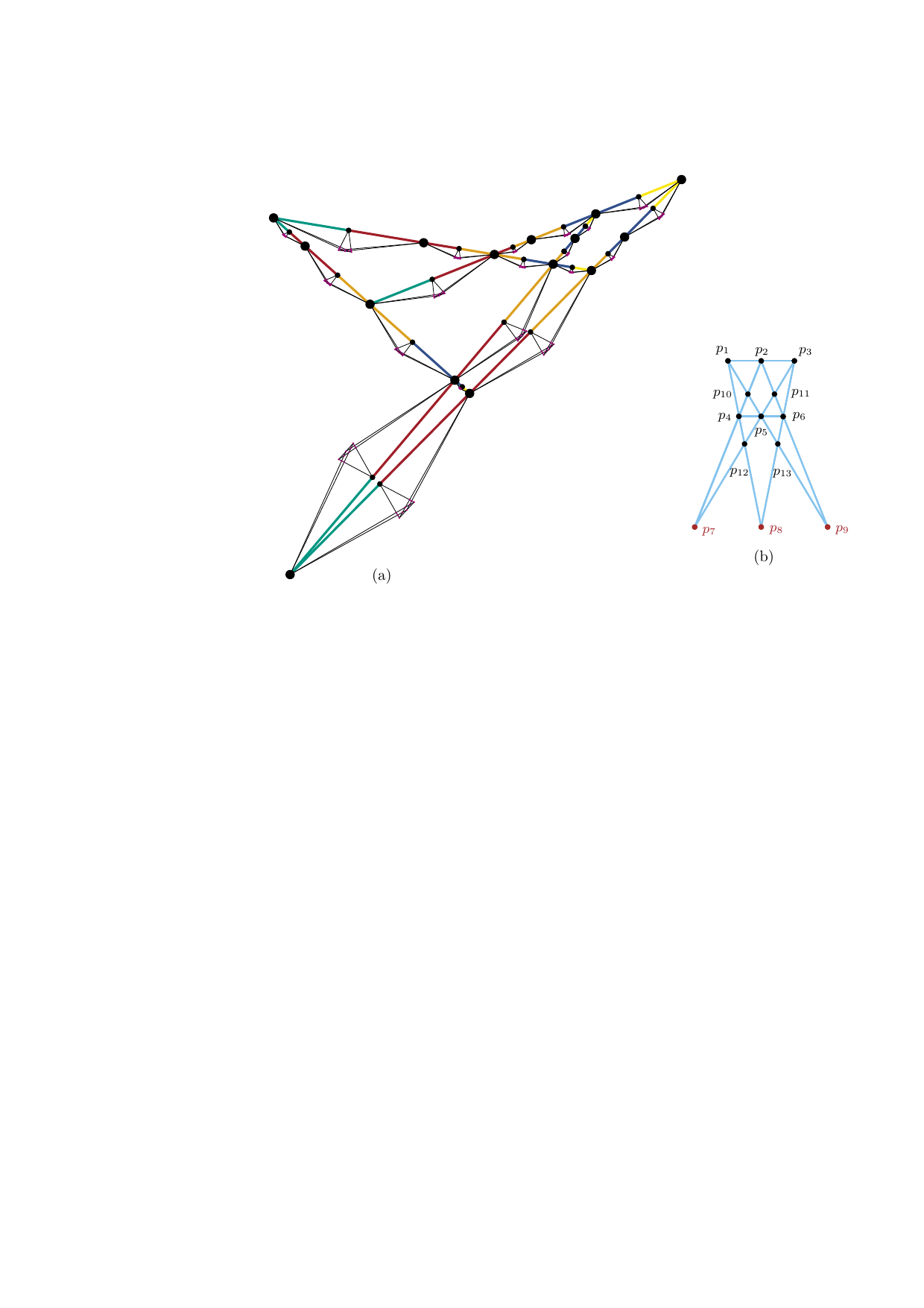}
    \caption{(a) A strict crossing-free segment representation of the hypergraph resulting from the reduction in \cref{thm:erstrictlinesegments} applied to the pseudoline arrangement in \cref{fig:nonstrHard}(a). (b) A strict crossing-free segment representation of the extended Pappus gadget.}
    \label{fig:ercrossingfreestrictlinesegments}
\end{figure}
With a slightly more involved reduction from stretchability using an \emph{extended Pappus gadget} (intersection points are replaced by vertices), we can show that it is $\exists\RR$-hard to decide whether a rank-$5$ max-degree-$10$ hypergraph has a strict crossing-free segment representation.
\begin{restatable}{theorem}{ercrossingfreestrictlinesegments}\label{thm:ercrossingfreestrictlinesegment}
    It is $\exists\RR$-hard to decide whether a rank-$5$ max-degree-$10$ hypergraph has a strict crossing-free segment representation.
\end{restatable}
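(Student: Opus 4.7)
The plan is to imitate the reduction from \textsc{pseudoline stretchability} used in the proof of \cref{thm:erstrictlinesegments}, but to replace each Pappus gadget by an \emph{extended Pappus gadget} $\widehat H_P$ that itself admits a strict crossing-free segment representation. I obtain $\widehat H_P$ from $H_P$ by inserting, for every pair of hyperedges $e,e'\in E_P$ whose segments are forced to cross in any Pappus drawing even though $e\cap e'=\emptyset$, a fresh vertex $q_{e,e'}$ into both $e$ and $e'$, thereby promoting the unavoidable crossing into a legitimate shared incidence. The picture in \cref{fig:ercrossingfreestrictlinesegments}(b) is then the target strict crossing-free segment representation of $\widehat H_P$.

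The first step is to verify three properties of $\widehat H_P$: (i) the anchors $p_7,p_8,p_9$ must still be collinear in every segment representation, because enlarging hyperedges only strengthens the incidence constraints that drive Pappus's theorem \cref{thm:pappusthm}; (ii) given three distinct collinear anchor positions, $\widehat H_P$ admits a strict crossing-free segment representation, obtained from the drawing in \cref{fig:pappusgadget}(c) by placing every $q_{e,e'}$ at the forced crossing of the two involved segments and truncating each segment at the convex hull of its incident vertices; and (iii) an analogue of \cref{lemma:pappusfreedom} holds for $\widehat H_P$, because shrinking the fixers into an arbitrarily small disk around a chosen generic point drags the $q_{e,e'}$ into that disk as well, so the whole extended gadget can avoid any prescribed finite set of forbidden points and lines in the ambient drawing.

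With these properties in hand, the reduction itself is structurally the same as in \cref{thm:erstrictlinesegments}: for an instance $I$ of \textsc{pseudoline stretchability} I introduce the intersection points as vertices, for every consecutive triple $v_i,v_{i+1},v_{i+2}$ along a pseudoline I add a copy of $\widehat H_P$ with this triple as its anchors, and I pin down the order with the size-two hyperedges $\{v_j,v_{j+1}\}$ along each pseudoline. In the forward direction, a strict crossing-free segment representation of the resulting hypergraph gives a stretching of $I$ because the extended gadgets force collinearity of each consecutive triple, the size-two hyperedges fix the linear order, and crossing-freeness plus strictness forbid every extra incidence. In the converse direction, given a stretching I embed each extended gadget in a small generic disk lying off the pseudolines, using the strengthened freedom lemma to choose the disks so small and generic that no two gadgets, and no gadget and size-two hyperedge, interact outside of their shared anchor incidences. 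For the complexity bounds, every hyperedge of $\widehat H_P$ is a size-three hyperedge of $H_P$ enlarged by at most two crossing vertices, as a direct count on the $\binom{8}{2}$ hyperedge pairs of $H_P$ shows, giving rank at most~$5$; and a case analysis of how many copies of $\widehat H_P$ and how many size-two hyperedges can meet at a single intersection vertex yields the claimed degree bound of~$10$.

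The main obstacle is the geometric scheduling in the converse direction: one must argue that the several extended gadgets sharing overlapping triples of anchors along one pseudoline, together with the extended gadgets inherited from its crossing pseudoline, can all be embedded in pairwise interior-disjoint regions which only touch the ambient drawing through their shared anchor incidences. I would handle this by invoking the strengthened version of \cref{lemma:pappusfreedom} one gadget at a time, each time updating the finite set of forbidden points and lines to include everything that has already been drawn; by continuity one can always shrink the disk and choose the auxiliary point generically enough that the newly placed extended gadget avoids all previously drawn features except at the three prescribed anchors, so the full construction remains strict and crossing-free.
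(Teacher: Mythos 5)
Your high-level strategy coincides with the paper's: both reduce from \textsc{pseudoline stretchability}, and both replace the Pappus gadget by an \emph{extended Pappus gadget} in which each forced crossing between two disjoint hyperedges is promoted to a new shared vertex, so that the gadget becomes strictly and crossing-freely drawable while \cref{thm:pappusthm} still applies to the induced representation of the underlying $H_P$. Your properties (i)--(iii) are exactly what the paper uses, at a comparable level of rigour, and your count showing that each hyperedge of $H_P$ is disjoint from at most two of the other seven, hence rank at most $5$, is correct.

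There is, however, a genuine gap in the quantitative claim. You attach one extended gadget to every consecutive triple $v_i,v_{i+1},v_{i+2}$ of intersection points, exactly as in \cref{thm:erstrictlinesegments}, and add the ordering hyperedges $\{v_j,v_{j+1}\}$. An interior intersection point $v_j$ then serves as an anchor of three gadgets on its own pseudoline; since each anchor has degree two inside the (extended) Pappus gadget, this contributes $3\cdot 2=6$ hyperedges, plus the two ordering hyperedges, i.e.\ $8$ per pseudoline and $16$ in total at a crossing of two pseudolines. Your construction therefore proves hardness for max-degree $16$, not $10$, and the asserted ``case analysis yielding the claimed degree bound of $10$'' does not hold for it. The paper avoids this by subdividing: it introduces a fresh vertex $w_i$ for each consecutive pair $v_i,v_{i+1}$, anchors one extended gadget at $(v_i,w_i,v_{i+1})$ with $w_i$ playing the role of the middle anchor $p_8$, and chains collinearity along the pseudoline with the hyperedges $\{w_i,v_{i+1},w_{i+1}\}$ plus the end hyperedges $\{v_1,w_1\}$ and $\{w_{t-1},v_t\}$. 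Then each $v_j$ meets only two gadgets and one chaining hyperedge per pseudoline, giving $2\cdot 2+1=5$ per pseudoline and max-degree $10$. Your argument still establishes $\exists\RR$-hardness for rank-$5$ hypergraphs of some bounded degree, but not the theorem as stated; to repair it you need the subdivision vertices (or an equivalent device), and you should also note explicitly, as the paper does, that the extended gadget loses the symmetry of $H_P$, so the middle point of each collinear anchor triple must be the one assigned to $p_8$.
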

\begin{proof}
    Let $I$ be an instance of \textsc{pseudoline stretchability} with at least $3$ lines. We construct a hypergraph $H$ by taking intersection points as vertices. For each pseudoline let its intersection points be $v_1, v_2,v_3,v_4,\dots,v_{t-2},v_{t-1},v_t$. We need to construct gadgets such that
    \begin{itemize}
        \item[] (1) all these points are collinear, and
        \item[] (2) they appear in the correct order.
    \end{itemize}
    For (1), consider two consecutive intersection points $v_{i},v_{i+1}$. We add a new vertex $w_i$. Now we want to force $v_i,w_i$ and $v_{i+1}$ to be collinear. We cannot use the Pappus gadget as the Pappus gadget itself has crossings. However, we can extend the Pappus gadget with new vertices that correspond to these intersection points (see \cref{fig:ercrossingfreestrictlinesegments}(b)).
    We call the hypergraph corresponding to the representation in \cref{fig:ercrossingfreestrictlinesegments} the \emph{extended Pappus gadget}. As the Pappus gadget is a subhypergraph of the extended Pappus gadget, in any strict crossing-free segment representation the \emph{anchors} $p_7,p_8,p_9$ must be collinear.
    Thus, for forcing the triple $v_i,w_i,v_{i+1}$ to be collinear, we introduce a new extended Pappus gadget and let the triple be its anchors. It is important that $v_i$ corresponds to $p_7$, $w_i$ to $p_8$, and $v_{i+1}$ to $p_9$, as we do not have the symmetries that were present in the standard Pappus gadget anymore.
    Next, for each triple $v_i,v_{i+1},v_{i+2}$ of consecutive intersection points we add the hyperedge $w_i,v_{i+1},w_{i+1}$. Together, the extended Pappus gadgets and these line segments force $v_1,v_2,\dots,v_t$ to be collinear.
    For (2), the above construction already ensures that $v_i,v_{i+1},v_{i+2}$ appear in the correct order for $2\le i\le t-3$. We simply add two more hyperedges $\{v_1,w_1\}$ and $\{w_{t-1},v_t\}$ to ensure the order for all $1\le i\le t-2$.

    We claim that $I$ is stretchable if and only if $H$ has a crossing-free strict segment representation (this correspondence is shown in \cref{fig:ercrossingfreestrictlinesegments}(a)).

    ``$\Rightarrow$'': Consider the representation of intersection points as they appear in the stretched pseudoline arrangement and for $v_i,v_{i+1}$ belonging to some pseudoline, represent $w_i$ somewhere along the line segment connecting $v_i$ and $v_{i+1}$. The hyperedges corresponding to the hyperedges of size two and three not belonging to the extended Pappus gadgets can be represented and pass through their two/three incident vertices. We also need to represent the extended Pappus gadgets. Due to arguments similar to \cref{lemma:pappusfreedom}, the extended Pappus gadget points that are not the anchors can be placed into an infinitesimally small disk that is arbitrarily centered. Hence, we can place the Pappus gadgets such that the line segments of the gadgets have no unwanted incidences of points (vertices) and lines (hyperedges), nor unwanted crossings. Hence, $H$ can be represented strictly by line segments.

    ``$\Leftarrow$'' We constructed $H$ such that in any strict crossing-free segment representation the intersection points of a polyline appear collinear and in the correct order. Hence, $I$ is stretchable.
\end{proof}

A correspondence between strict crossing-free segment representations of degree-$2$ hypergraphs and segment intersection graphs gives the following result.

\begin{restatable}{theorem}{erstrictlinesegmentmdegtwo}\label{thm:erstrictlinesegmentmdeg2}
    It is $\exists\RR$-hard to decide whether a max-degree-$2$ hypergraph has a strict crossing-free segment representation.
\end{restatable}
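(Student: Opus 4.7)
The plan is to reduce from the problem of recognizing segment intersection graphs (SEG), which Matou\v{s}ek~\cite{DBLP:journals/corr/Matousek14} showed to be $\exists\RR$-hard; we may assume without loss of generality that the input graph $G=(V_G,E_G)$ has no isolated vertices. We construct a hypergraph $H=(V,E)$ by introducing one hyperedge $e_v\in E$ for each vertex $v\in V_G$ and one vertex $x_{uv}\in V$ for each edge $\{u,v\}\in E_G$, with $x_{uv}$ lying in exactly the two hyperedges $e_u$ and $e_v$. Every vertex of $H$ then belongs to exactly two hyperedges, so $H$ has maximum degree two, and the hyperedge intersection graph of $H$ is isomorphic to $G$.

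For the easy direction, if $(\alpha,\beta)$ is a strict crossing-free segment representation of $H$, then the segments $\{\beta(e_v)\}_{v\in V_G}$ form a SEG representation of $G$: two such segments share a point if and only if there is a vertex of $H$ lying in both $e_u$ and $e_v$ (by the crossing-free property), which is equivalent to $\{u,v\}\in E_G$ by construction. Strictness and the injectivity of $\beta$ guarantee that this correspondence is well defined.

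For the converse, starting from a SEG representation of $G$ by segments $\{s_v\}$, the natural attempt is to set $\beta(e_v)=s_v$ and to place $\alpha(x_{uv})$ at an intersection point of $s_u$ and $s_v$. For this to give a strict crossing-free representation of $H$, the SEG representation must be in \emph{general position}: no two adjacent segments may overlap along a subsegment (otherwise strictness fails) and no three segments may be concurrent (otherwise three distinct vertices $x_{uv}, x_{vw}, x_{uw}$ of $H$ would collapse onto a single point, breaking injectivity of $\alpha$). The main obstacle is therefore to argue that every SEG admits such a general-position representation. I would handle this by a perturbation argument: slightly rotating and, if necessary, extending each segment removes overlaps and triple concurrencies while preserving every transversal intersection and every pair of disjoint segments, with extra care required when two adjacent segments currently meet only tangentially or at an endpoint. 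Once the representation is in general position, the construction above directly yields the desired strict crossing-free segment representation of $H$, completing the reduction.
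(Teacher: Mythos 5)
Your proposal is correct and follows essentially the same route as the paper: a reduction from segment intersection graph recognition in which hyperedges of $H$ correspond to vertices of $G$ and degree-2 vertices of $H$ correspond to edges of $G$, with the converse direction handled by putting the SEG representation into general position (no two segments collinear, no three concurrent) via a standard perturbation, which the paper likewise invokes by citation rather than proving in detail.
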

\begin{proof}
    We reduce from the problem \textsc{segment intersection graph recognition}, which is $\exists\RR$-hard \citep{DBLP:journals/corr/Matousek14}. This problem is given a simple graph $G$ and asks whether there exists an arrangement $\mathcal{A}$ of segments in the plane such that $G$ is the intersection graph of $\mathcal{A}$, i.e., the vertices of $G$ correspond to segments of $\mathcal{A}$, and two vertices are connected by an edge if and only if the corresponding segments share at least one point. In that case, we say that $\mathcal{A}$ \emph{represents} $G$. We can assume that $G$ is connected and has at least two vertices as this restriction preserves the $\exists\RR$-hardness result of \citet{DBLP:journals/corr/Matousek14}.
    We construct a hypergraph $H$ as follows. For each edge $e$ in $G$ we add a vertex $v_e$ to $H$. For each vertex $v$ in $G$ that is incident to edges $e_1,e_2,\dots,e_\ell$ we add the hyperedge $\{v_{e_1},v_{e_2},\dots,v_{e_\ell}\}$ to $H$. Notice that $H$ has degree two as each edge in $G$ is incident to two vertices.
    We show that $G$ is a segment intersection graph if and only if $H$ has a strict crossing-free segment representation.

    ``$\Rightarrow$'': Let $\mathcal{A}$ be a representation of $G$. As $G$ is simple, we can assume that no two segments in $\mathcal{A}$ lie on the same line (see also \citep{DBLP:journals/corr/Matousek14}). Thus, if two line segments share a non-empty intersection, this intersection consists of a single point. Furthermore, with standard perturbation arguments we can assume that no three line segments in $\mathcal{A}$ intersect at a single point, see e.g.~\citep[Section 5]{eppstein2009testing}.
    It is now easy to turn $\mathcal{A}$ into a strict crossing-free segment representation $(\alpha,\beta)$. For vertices $v_e$ in $H$, $\alpha(v_e)$ is simply the unique intersection point of the two line segments corresponding to the two incident vertices of $e$. For the hyperedge $\hat{e}=\{v_{e_1},v_{e_2},\dots,v_{e_\ell}\}$, $\beta(\hat{e})$ is the line segment in the line arrangement passing through $\alpha(v_{e_1}),\alpha(v_{e_2}),\dots,\alpha(v_{e_\ell})$.

    ``$\Leftarrow$'': Let $(\alpha, \beta)$ be a strict crossing-free segment representation of $H$. Notice that when interpreting the representation as line segment arrangement $\mathcal{A}$, then $\mathcal{A}$ represents $G$.
\end{proof}

\paragraph*{Solvable cases.}
We now present some cases where the existence of a crossing-free segment representation can be decided in polynomial time.

The rank-2 case corresponds to deciding whether a graph is planar.
\begin{observation}\label{obs:segmentrank2planarity}
    Let $H$ be a rank-$2$ hypergraph without hyperedges of size $1$. Then $H$ is a simple graph, and every crossing-free segment representation can be made strict. Furthermore, $H$ has such a representation if and only if it is planar.
\end{observation}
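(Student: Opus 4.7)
I would address the three parts of the observation in sequence.

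For the first claim, every hyperedge $e \in E$ satisfies $|e|\le 2$ by the rank assumption and $|e|\ge 2$ since singletons are excluded, so $|e|=2$. Combined with our standing convention that hyperedges are distinct subsets of $V$, the incidence structure of $H$ is literally that of a simple graph.

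For the second claim, I would take any crossing-free segment representation $(\alpha,\beta)$ and shorten every edge: for each $e = \{u,v\}$, define $\beta'(e)$ to be the line segment with endpoints $\alpha(u)$ and $\alpha(v)$. Because $\beta'(e) \subseteq \beta(e)$, no spurious vertex incidence is introduced, and the images of two hyperedges $e,e'$ with $e\cap e'=\emptyset$ remain disjoint; hence $(\alpha,\beta')$ is still crossing-free. For two hyperedges $e_1 = \{u,v\}$ and $e_2=\{u,w\}$ sharing a vertex, $\beta'(e_1)$ and $\beta'(e_2)$ both contain $\alpha(u)$; a second common point can arise only if $\alpha(u),\alpha(v),\alpha(w)$ are collinear with $\alpha(u)$ not between $\alpha(v)$ and $\alpha(w)$. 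In that configuration the original $\beta(e_1) \supseteq \beta'(e_1)$ would already contain $\alpha(w)$, or symmetrically $\beta(e_2)$ would contain $\alpha(v)$, contradicting the incidence property of $(\alpha,\beta)$. Hence $(\alpha,\beta')$ is crossing-free and strict, witnessing the claim.

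For the third claim, a (strict) crossing-free segment representation of a simple graph $H$ coincides with a straight-line planar drawing of $H$: vertices become points, edges become segments between their endpoints, and two segments meet only at shared endpoints. Thus $H$ has such a representation if and only if $H$ admits a straight-line planar embedding, which by Fáry's theorem is equivalent to $H$ being planar.

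The argument presents no genuine obstacle; the only spot that benefits from a line of reasoning is ruling out the collinear ``same-side'' configuration in the strictness argument, which is ultimately forced by the validity of the original representation.
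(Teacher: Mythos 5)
Your proof is correct. The paper states this as an unproved observation, so there is no authorial proof to compare against, but your argument (shorten each $\beta(e)$ to the segment between $\alpha(u)$ and $\alpha(v)$, rule out a collinear same-side overlap via the incidence condition of the original representation, then invoke F\'ary's theorem) is exactly the routine verification the authors are implicitly relying on, and you fill in the one non-trivial detail---why shortening yields strictness---cleanly.
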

As hyperedges of size $1$ do not affect segment representability, this solves the problem for rank-2 hypergraphs.

We now present some hypergraph classes for which we can decide the existence of a strict crossing-free segment representation in polynomial time. The first class is based on \revision{\emph{permutation graphs} which are graphs with vertex set $\{v_1,\dots,v_n\}$ and there exists a permutation $\pi$ such that $v_i,v_j$, $i<j$, are adjacent if and only if $i$ comes after $j$ in $\pi$.}

\begin{restatable}{theorem}{permgraphthm}\label{thm:permgraph}
    Let $H$ be a degree-2 linear  hypergraph and let $G$ be its hyperedge intersection graph. If $G$ is a permutation graph (which is verifiable in linear time), then $H$ admits a crossing-free strict segment representation. %
\end{restatable}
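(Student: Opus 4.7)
The plan is to exploit the classical characterization of permutation graphs as the intersection graphs of straight-line segments whose endpoints lie on two parallel lines. Given the permutation $\pi$ certifying that $G$ is a permutation graph, I would place $m=|E|$ points in natural order on a horizontal line $L_1$ and the labels $\pi(1),\dots,\pi(m)$ in the corresponding positions on a parallel horizontal line $L_2$. Joining each label $i$ on $L_1$ to its counterpart on $L_2$ produces segments $s_1,\dots,s_m$, one per hyperedge, such that $s_i$ and $s_j$ cross if and only if $e_i$ and $e_j$ are adjacent in $G$, i.e., if and only if $e_i\cap e_j\neq\emptyset$. A small perturbation of the endpoints along $L_1,L_2$ preserves the combinatorial crossing pattern while ensuring that no three segments meet in a single point.

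Next, I would define the representation of $H$ by setting $\beta(e_i)=s_i$. Each degree-$2$ vertex $v$ of $H$ lies in exactly two hyperedges $e_i,e_j$; the corresponding segments $s_i,s_j$ cross in one point, which I take as $\alpha(v)$. Linearity of $H$ guarantees that distinct degree-$2$ vertices correspond to distinct pairs of hyperedges, yielding distinct intersection points, so $\alpha$ is injective on such vertices. For a degree-$1$ vertex $v$ contained in $e_i$, I would place $\alpha(v)$ at some point of $s_i$ avoiding every crossing of $s_i$ with other segments as well as every previously chosen $\alpha$-value; this is feasible because only finitely many points need to be avoided on the segment.

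Finally, I would verify the three defining properties of a strict crossing-free segment representation. The forward incidence direction $v\in e\Rightarrow\alpha(v)\in\beta(e)$ is immediate from the construction; the reverse direction uses the absence of triple intersections (for degree-$2$ vertices) and the careful placement rule (for degree-$1$ vertices). Strictness is automatic, since two distinct segments with endpoints on two parallel lines cannot be collinear and therefore meet in at most one point. Crossing-freeness restates the permutation-graph characterization directly: $\beta(e_i)\cap\beta(e_j)\neq\emptyset$ iff $s_i,s_j$ cross iff $e_i\cap e_j\neq\emptyset$. Linear-time recognition of permutation graphs together with the extraction of the witnessing $\pi$ is a known result that I would cite. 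The main conceptual step is simply identifying that the geometric realization already built into the definition of a permutation graph is exactly what is needed; the only mild obstacle I anticipate is the general-position bookkeeping for degree-$1$ vertices and triple crossings, which is routine.
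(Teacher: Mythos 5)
Your proposal is correct and follows essentially the same route as the paper's proof: realize the permutation graph as segments spanning two parallel lines, perturb to avoid triple intersections, place each degree-2 vertex at the crossing of its two hyperedges' segments, and put degree-1 vertices at unused points of their segment. The only difference is cosmetic---you build the two-line diagram explicitly from $\pi$ and spell out the verification of strictness and crossing-freeness, whereas the paper invokes the linear-time permutation-graph representation algorithm directly.
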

\begin{proof}
    If $G$ is a permutation graph, then by definition $G$ admits a representation $\Gamma$ with segments such that two segments intersect if and only if the corresponding vertices are adjacent in $G$. Furthermore, the segments lie between a pair of parallel lines $\ell_u$  and $\ell_b$, where each segment has one endpoint on $\ell_u$  and the other endpoint on $\ell_b$. One can determine in linear time whether a graph is a permutation graph, and, if so, then its segment representation can also be constructed in linear time~\citep{mcconnell1999modular}. Figures~\ref{fig:t20}(a)--(b) illustrate $H$ and the corresponding segment representation $\Gamma$ of its hyperedge intersection graph. Note that a perturbation of the endpoints on $\ell_u$ does not change the intersection representation as long as the order of the endpoints on $\ell_u$ is preserved, but it can ensure that
    no three  segments meet at a single point (\cref{fig:t20}(c)). %

\begin{figure}[h]
    \centering    \includegraphics[width=\linewidth]{./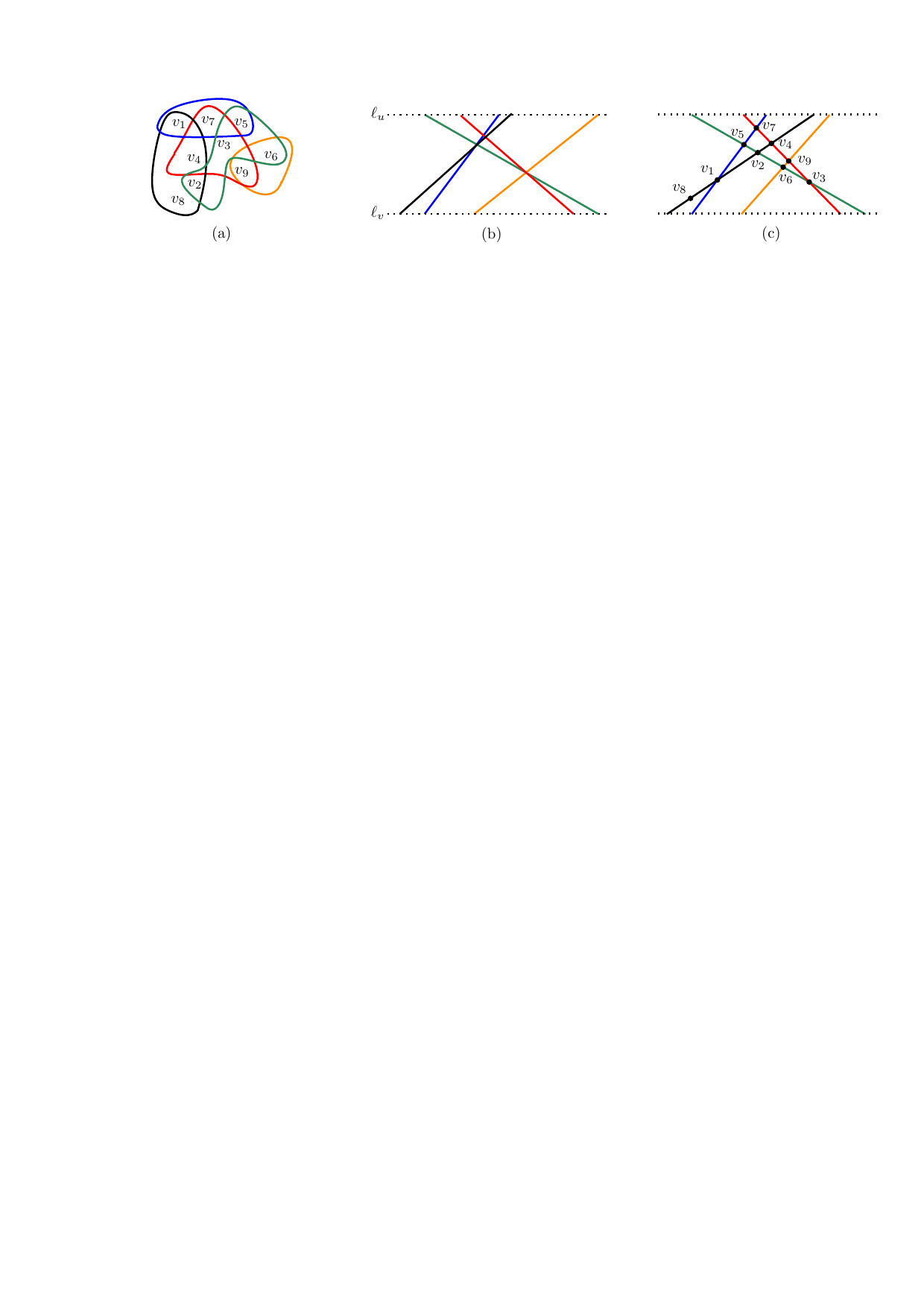}
    \caption{(a) A degree-2 linear hypergraph $H$.  (b) An intersection representation of the hyperedge intersection graph.   (c) A perturbation of the endpoints on $\ell_u$ and the representation of the vertices. }
    \label{fig:t20}
\end{figure}

Consider now the segments of $\Gamma$ as the hyperedges of $H$. Every vertex $v$  with degree two in $H$ can be represented at the intersection point of the hyperedges in $\Gamma$ that contain $v$. Finally, the vertices that are unique to a hyperedge can be placed by choosing distinct points on the segment representing that hyperedge.
\end{proof}

The \emph{vertex-edge incidence graph} of a hypergraph $H=(V,E)$ is a bipartite graph $G=(V \cup E, E')$, where  an edge $(v,e)\in E'$, $v\in V$ and $e\in E$, exists, if and only if $e$ contains $v$. We obtain the following result.

\begin{restatable}{theorem}{onesided}\label{lem:onesided}
    Let $H=(V,E)$ be a hypergraph with the  vertex-edge incidence graph $G=(V\cup E,E')$. If $G$ has a planar embedding with vertex set $E$ on the outerface, then we can decide whether $H$ admits a crossing-free strict segment representation in polynomial time.
\end{restatable}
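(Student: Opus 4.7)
The plan is to exploit the rigidity imposed by the planarity condition on $G$: with $E$ on the outer face, the combinatorial layout of any strict crossing-free segment representation of $H$ is essentially fixed, so the existence question reduces to a tractable geometric feasibility problem.

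First, I would compute a planar embedding $\mathcal{E}$ of $G$ with $E$ on the outer face in linear time using a standard planarity algorithm. From $\mathcal{E}$ I would extract three pieces of combinatorial data: the cyclic order $\sigma$ of hyperedges along the outer face, a linear order $\pi_e$ of the incident vertices along each hyperedge $e$ (obtained by walking the adjacency list of $e$ starting from the outer-face side), and a rotation order $\rho_v$ of incident hyperedges at each $v \in V$. I would then argue that every strict crossing-free segment representation of $H$ induces a planar embedding of $G$ whose combinatorial data, up to a global reflection, must coincide with $(\sigma,\{\pi_e\},\{\rho_v\})$. Strictness is crucial here: since any two segments meet in at most one point, each vertex $v$ is the unique common point of the segments of its incident hyperedges, so the rotation order around $v$ in the plane is forced. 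Moreover, each segment can be extended infinitesimally past its last vertex, so the induced embedding of $G$ always places $E$ on the outer face. Together with \cref{observation:strictvnonstrict} (which lets us reject non-linear inputs immediately), this reduces the question to whether the fixed combinatorial data admits a straight-segment realization.

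Second, I would encode the realization question as a polynomial-size system of linear equalities and strict linear inequalities over the coordinates of the vertices and segment endpoints. The equalities enforce collinearity of the vertices belonging to a common hyperedge; the inequalities enforce the prescribed orders $\pi_e$ and $\rho_v$, as well as the absence of point-on-segment incidences not predicted by $H$ and of segment-on-segment crossings not predicted by $\mathcal{E}$. Feasibility of such a system is decidable in polynomial time by linear programming, and a generic perturbation of any feasible solution removes accidental triple intersections outside $V$ without violating any of the prescribed constraints.

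The main obstacle is proving that this linear-feasibility check is not only necessary but also sufficient for representability: in general stretchability problems, higher-degree algebraic obstructions (in the spirit of the Pappus configuration of \cref{thm:pappusthm}) can make the analogous decision problem $\exists\RR$-hard. The key point is that the outerness of $E$ gives every segment a ``free side'' pointing toward the outer face, so the combinatorial data of $\mathcal{E}$ leaves enough geometric freedom to avoid such obstructions. Making this rigorous, likely via an inductive construction that inserts hyperedges in the cyclic order $\sigma$ while maintaining LP feasibility and using the free outer-face direction at each step to resolve any would-be algebraic conflict, is the heart of the argument.
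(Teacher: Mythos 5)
There is a genuine gap, and in fact two. First, your reduction to ``a polynomial-size system of linear equalities and strict linear inequalities'' does not work as stated: collinearity of three or more free points is a determinant condition that is bilinear in the coordinates, not linear, and the order/side constraints you impose are likewise sign conditions on $2\times 2$ or $3\times 3$ determinants. This is precisely why stretchability-type realizability questions land in $\exists\RR$ rather than in \P; you cannot discharge them by linear programming over the point coordinates. You acknowledge this tension in your final paragraph but then defer ``the heart of the argument'' to an unspecified inductive construction, so the sufficiency direction --- the only hard part --- is not actually proved. Second, your claim that every strict crossing-free segment representation must induce the \emph{same} combinatorial data $(\sigma,\{\pi_e\},\{\rho_v\})$ as the one embedding $\mathcal{E}$ you computed is false in general: $G$ need not be $3$-connected, so it can have many inequivalent planar embeddings with $E$ on the outer face, and different representations may realize different ones. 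Fixing one embedding and testing only it could make your algorithm reject representable instances.

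The paper's proof avoids all of this by showing that the answer is essentially always ``yes'': since $G$ is bipartite its girth is at least $4$; girth exactly $4$ means two hyperedges share two vertices, which kills any strict representation and is detectable in linear time; and if the girth is at least $6$, a crossing-free strict segment representation \emph{always exists} and is built explicitly. The construction takes a planar drawing of $G$ with the hyperedge vertices on a horizontal line, then repeatedly peels off the outermost layer of vertices and hyperedges, realizing each maximal run of new hyperedges as a convex chain nested between the two already-drawn bounding segments, using the girth-$\ge 6$ condition to rule out the adjacencies that would force a crossing. So the decision procedure is just a linearity check; no feasibility system is needed. The key insight you are missing is not how to test realizability of a fixed combinatorial structure, but that under the outerplanarity hypothesis on $E$ there is no geometric obstruction beyond linearity at all.
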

\begin{figure}[pt]
    \centering
    \includegraphics[width=\textwidth]{./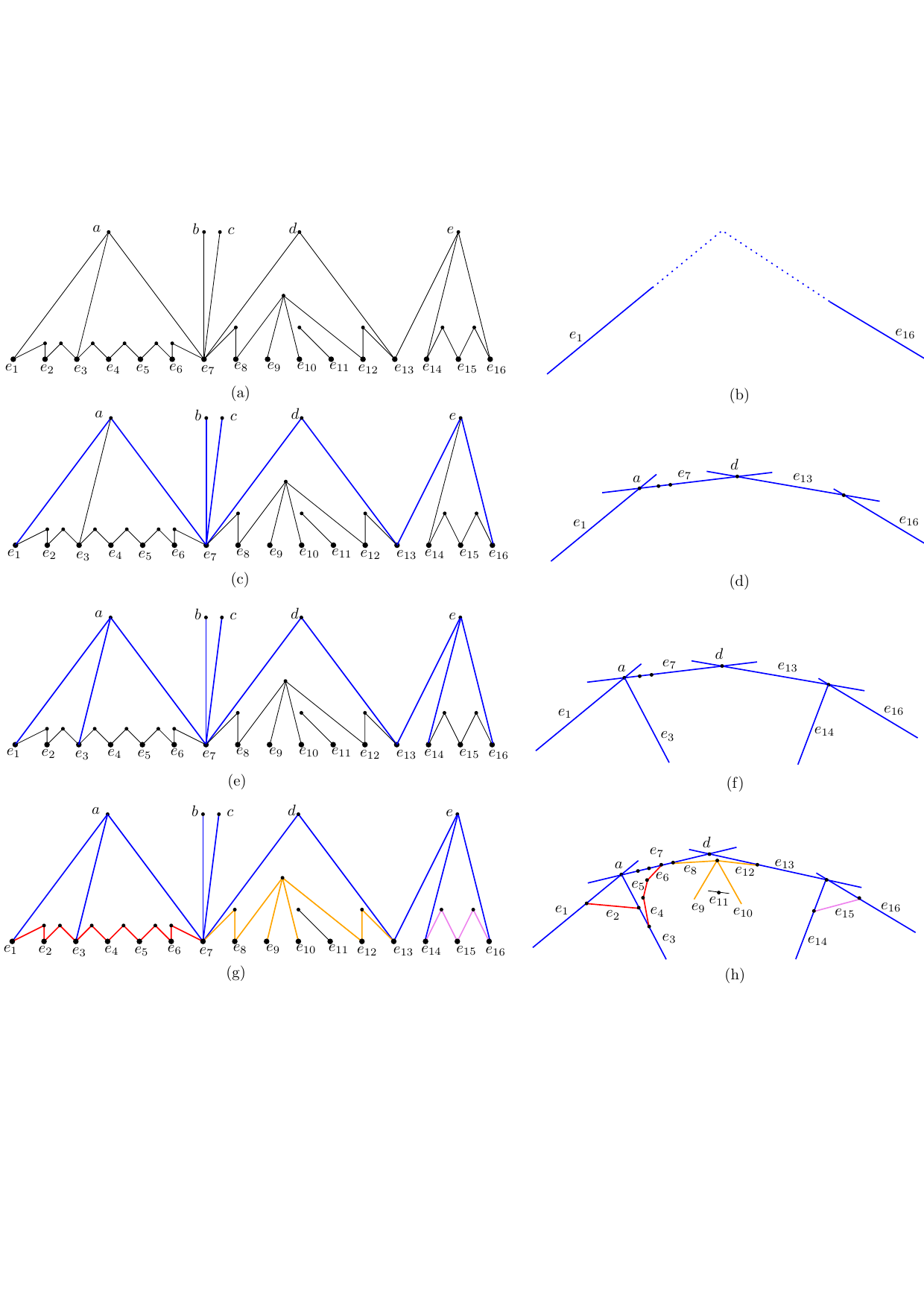}
    \caption{Illustration for \cref{lem:onesided}. Construction of a crossing-free strict segment representation in \cref{lem:onesided}}
    \label{fig:onesided}
\end{figure}
\begin{proof}
    Since $G$ is a bipartite graph, the girth (i.e., the length of a shortest cycle) of $G$ is at least $4$. If $G$ is of girth 4 then there are two hyperedges $e,e'$ in $H$ that contain at least two vertices in common. Therefore, $H$ cannot have a strict line segment representation. We can check whether the girth of $G$ is 4 in linear  time~\citep{chang2013computing}.
    If the girth of $G$ is at least 6, then we can construct a crossing-free strict segment representation as described below.

    Let $\Gamma$ be an embedding of $G$ where all the vertices of $E$ are on the outerface. %
    The embedding $\Gamma$ can be used to construct a straight-line planar drawing $\Gamma'$ where each vertex in $E$ lies on a horizontal line $L$, each vertex in $V$ appears strictly above $L$ with the $x$-coordinate lying between its leftmost and rightmost neighbours, and each edge in $E'$ drawn with a straight line segment between the corresponding endpoints (\cref{fig:onesided}(a)). Such a drawing can be computed in polynomial time~\citep{kaufmann2002embedding}.

    We repeatedly find the set of vertices $S$ in $\Gamma'$ (i.e., vertices and hyperedges of $H$) that lie on the unbounded region above $L$ and create points and segments to represent them. For example, in \cref{fig:onesided}(a), the selected set is  $S=\{e_1,a,e_7,b,c,d,e_{13},e_{16}\}$. Let $S_i$ be the set selected at the $i$th iteration.

    Let $\ell_s$ be the line determined by a line segment $s$. We will use a drawing invariant that the leftmost and rightmost hyperedges  $e,e'$ of $S_i$ are already drawn with distinct line segments $s,s'$ and the rest of the hyperedges of $S_i$ would be drawn using a convex chain in the region bounded by the lines $\ell_s, \ell_{s'}$.

    For $S_1$, we draw the leftmost and rightmost hyperedges $e,e'$ using segments $s,s'$ of positive and negative slopes, respectively, so that their corresponding lines form a convex chain (\cref{fig:onesided}(b)). If there is a vertex in $S$ that is incident to both $e,e'$, then we ensure that $s$ and $s'$ intersect, otherwise, we keep them disjoint. Let $e_1(=e),\ldots,e_t(=e')$ be the hyperedges of $H$ in $S$ which are in this left to right order on $L$. We draw $e_2,\ldots,e_{t-1}$ using a convex polygonal chain between $e,e'$ and place the vertices of $S$ on the chain to realize the necessary adjacencies.   \cref{fig:onesided}(c)--(d) illustrate the scenario for $S_1$. For each vertex $v$ on $S$, we now draw the hyperedges adjacent to it by splitting the angle created by leftmost and rightmost hyperedges, e.g., see vertex $a$ in \cref{fig:onesided}(e)--(f). We now continue with $S_i$, $i\ge 2$. Note that some hyperedges of $S_i$ are already drawn and we can use them to find maximal sequences in $S_i$ where the leftmost and rightmost hyperedges are already drawn. For example, \cref{fig:onesided}(g) shows the maximal sequences for $S_2$ in red, green, orange and purple. For each maximal sequence $e_p,\ldots,e_q$, where $1\le p,q \le |E|$, we draw the sequence with a convex chain between the segments representing $e_p$ and $e_q$. However, we also need to ensure the drawing remains crossing-free. If $S_{i-1}$ contains a vertex that is adjacent to both  $e_p$ and $e_q$, then $S_i$ cannot contain a vertex adjacent to both $e_p$ and $e_q$ (otherwise, the girth of $G$ would be 4). Therefore, we can draw $e_p,\ldots,e_q$ such that the drawing remains crossing-free. \cref{fig:onesided}(h) illustrates the line segment representation for the hypergraph of \cref{fig:onesided}(g).
\end{proof}

The representation that we construct in Lemma~\ref{lem:onesided} can be seen as a contact system of segments, and hence this identifies a class of graphs for which the existence of segment contact representation can be tested in polynomial time without using the conditions of~\citet{de2007representation,DBLP:journals/algorithmica/FraysseixM07} that check some properties over all subsystems of at least two paths.

\section{Beyond 0-bend representations}\label{section:beyondzerobend}
Researchers have examined various geometric representations of rank-$3$ max-degree-$3$ hypergraphs~\citep{glynnRepresentationConfigurationsProjective2000,kocay1999application}. In this section, we discuss rank-$3$ max-degree-$3$ hypergraphs and whether they have line representations with few bends.
Indeed, the PhD thesis of \citet{steinitz1894konstruction} claims that every $3$-uniform $3$-regular hypergraph can be represented with one line per hyperedge, except maybe one hyperedge (which could be represented with one bend).
However, more careful consideration shows that this is indeed not true, as already pointed out by \citet{DBLP:books/ph/Gruenbaum09}. %

We show a construction that has at least two hyperedges that must have a bend, and generalize this construction.
For this, we define, for $t\in \mathbb{N}_0$, a strict $t$-bend line representation for a hypergraph as follows. In the original definition of line representations, we replace lines by what we call \emph{infinite polygonal chains}. An infinite polygonal chain consists of a (possibly empty) polygonal chain and two rays, one ending at the first point of the chain, one ending at the last point (see \cref{fig:polychain}).
\begin{figure}[t]
    \centering
    \includegraphics[width=.9\textwidth]{./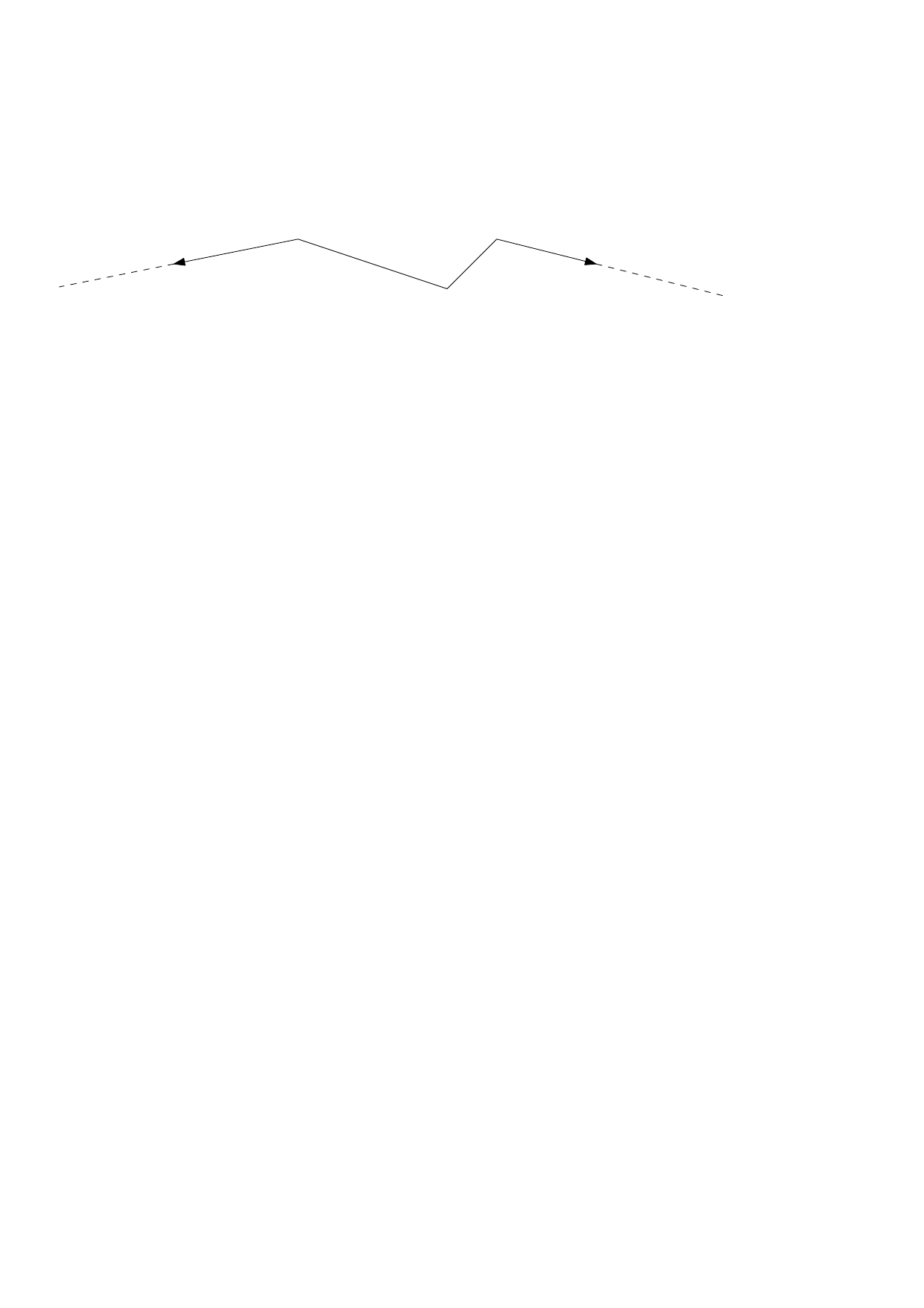}
    \caption{An infinite polygonal chain with $3$ bends extending infinitely to the left and right.}
    \label{fig:polychain}
\end{figure}
Essentially, an infinite polygonal chain is a polyline whose first and last segment is extended infinitely.
Further, two distinct infinite polygonal chains $\beta(e),\beta(e')$ for $e,e'\in E$ must not share a line segment nor a bend point.
Lastly, we require that the total number of all bends in $\beta(E)$ is exactly $t$, where $\beta(E)=\{\beta(e)\mid e\in E\}$. %

Consider the connected 3-uniform 3-regular hypergraph $H$ defined as follows.
Let $H_P^1$ and $H_P^2$ be two Pappus gadgets with vertices $p_1,\dots,p_9$ and $p_1',\dots,p_9'$, respectively.
The hypergraph $H$ is the union of $H_P^1,H_P^2$, and the hyperedges $\{p_7',p_8,p_9\}$ and $\{p_7,p_8',p_9'\}$. We have the following.
\begin{lemma}
    There is no $t$-bend representation for $H$ with $t<2$.
\end{lemma}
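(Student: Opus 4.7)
The plan is a short case analysis on where the (at most one) bend lies, using Theorem~\ref{thm:pappusthm} twice, and without ever having to reason geometrically about the shape of the bent polyline itself. Denote the two ``connector'' hyperedges by $e_1 := \{p_7', p_8, p_9\}$ and $e_2 := \{p_7, p_8', p_9'\}$, and the two Pappus subhypergraphs by $H_P^1$ and $H_P^2$, so that $H$ has exactly the $8 + 8 + 2 = 18$ hyperedges.

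First I would record that in any $t$-bend representation with $t \le 1$, at most one of the $18$ hyperedges is drawn with a bend; all other hyperedges are straight lines. Then I would split into two symmetric subcases, depending on which side the (possibly nonexistent) bent hyperedge sits on.

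Subcase~(b): the bend, if any, belongs to $H_P^1 \cup \{e_1\}$. Then every one of the eight hyperedges of $H_P^2$ is a straight line, so Theorem~\ref{thm:pappusthm} applied to $H_P^2$ puts the anchors $p_7', p_8', p_9'$ on a common line $L_2$. Since $e_2$ is also a straight line passing through the two distinct points $\alpha(p_8'), \alpha(p_9') \in L_2$, we get $\beta(e_2) = L_2$. But Pappus also forces $\alpha(p_7') \in L_2 = \beta(e_2)$, which by the ``if and only if'' clause in the definition of a representation forces $p_7' \in e_2 = \{p_7, p_8', p_9'\}$; this contradicts the injectivity of $\alpha$ on the six distinct anchors. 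Subcase~(a), where the bend lies in $H_P^2 \cup \{e_2\}$, is completely symmetric: now $H_P^1$ is drawn entirely by straight lines, Theorem~\ref{thm:pappusthm} collinearises $p_7, p_8, p_9$ on a line $L_1$, the straight line $\beta(e_1)$ passes through $\alpha(p_8), \alpha(p_9) \in L_1$, so $\beta(e_1) = L_1 \ni \alpha(p_7)$ even though $p_7 \notin e_1$, again a contradiction.

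The only subtlety I anticipate is to check that the case split is exhaustive. This is immediate: the single possibly bent hyperedge belongs to at most one of $H_P^1, H_P^2, \{e_1\}, \{e_2\}$, and if $t=0$ then either subcase applies. The crux of the argument is the observation that in each subcase an entire Pappus gadget together with the ``opposite'' connector is drawn straight, so the contradiction can be extracted purely from straight-line hyperedges -- no analysis of what the V-shaped polyline might look like is needed.
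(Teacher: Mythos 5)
Your proof is correct and uses the same core argument as the paper: Pappus forces the anchors of a fully straight gadget to be collinear, so the straight connector hyperedge through two of them must also hit the third, a forbidden incidence; the paper phrases this as ``each of the two disjoint units $H_P^i\cup\{e_i\}$ needs a bend'' while you take the contrapositive via a case split on where the single bend could lie, which is the same idea. Your version is, if anything, stated more carefully (the paper's wording mixes up which gadget is assumed straight), but there is no substantive difference in approach.
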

\begin{proof}
    If every hyperedge in the subhypergraph $H_P^2$ is represented without a bend, then $\beta(\{p_7',p_8,p_9\})$ must pass through $p_7$ due to \cref{thm:pappusthm}. Thus, at least one hyperedge of $H_P^1$ must be represented with at least one bend.
    Applying the same argument to $H_P^2$, we see that we require at least two bends.
\end{proof}

\begin{figure}[t]
    \centering
    \includegraphics[width=\linewidth]{./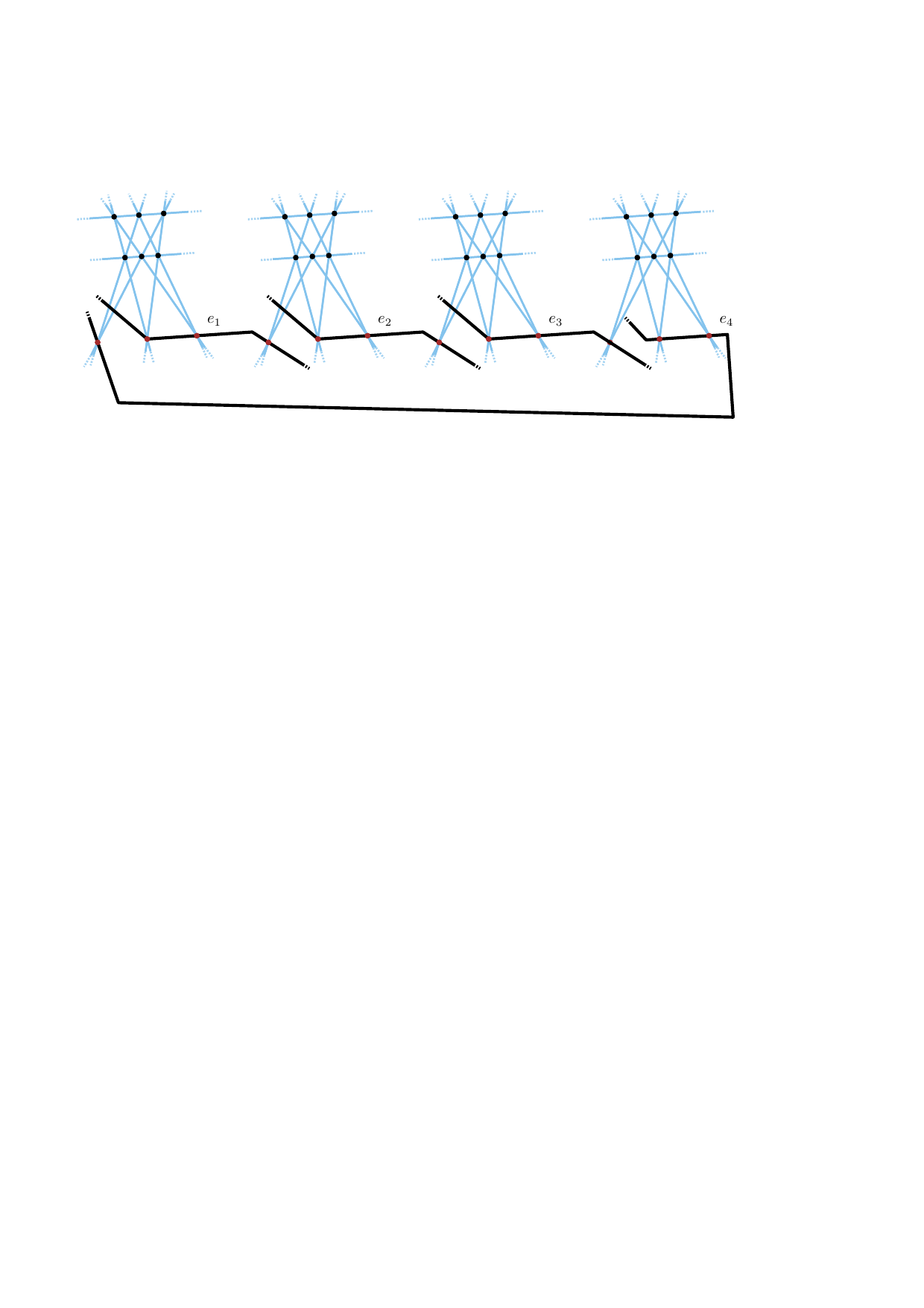}
    \caption{A 3-uniform 3-regular hypergraph requiring at least 4 bends in a strict line representation.}
    \label{fig:manybends}
\end{figure}
The above construction can be generalized so that there is no strict $t$-bend line representation for any $t<x$ for an arbitrary $x\in\mathbb{N}$. Instead of one copy of $H_P^1$, we add $x-1$ copies $H_P^2,H_P^3,\dots,H_P^x$. See also \cref{fig:manybends}.
Further, we add $x$ hyperedges $e_1,e_2,\dots,e_x$ such that $e_i$ contains $p_8,p_9$ of $H_P^i$ and $p_7$ of $H_P^{(i \mod x) + 1}$.
\begin{theorem}
    For any $x\in \mathbb{N}$ there exists a connected rank-3 hypergraph $H$ such that there exists no strict $t$-bend representation for $H$ for any $t<x$.
\end{theorem}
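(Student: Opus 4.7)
The plan is to mimic the two-gadget lemma and show that each gadget--connecting-edge pair $(H_P^i, e_i)$ independently forces at least one bend; summing across the $x$ pairs, since they partition the hyperedges of $H$, then yields the desired lower bound of $x$ bends.

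Suppose, for contradiction, that $H$ admits a strict $t$-bend representation $(\alpha,\beta)$ with $t<x$, taking $x\geq 2$ (the only nontrivial case). For each $i\in\{1,\ldots,x\}$, let $b_i$ denote the number of bends among the hyperedges of $H_P^i$, and let $c_i$ denote the number of bends in the connecting hyperedge $e_i=\{p_8^i,p_9^i,p_7^{i'}\}$, where $i'=(i\bmod x)+1$. The gadgets $H_P^i$ are built on pairwise disjoint vertex copies, so their hyperedges are pairwise disjoint from each other and from all $e_j$; hence $t=\sum_{i=1}^{x}(b_i+c_i)$.

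The crux is the bound $b_i+c_i\geq 1$ for every $i$. Assume otherwise for some index $i$, so that all hyperedges of $H_P^i$ and also $e_i$ are drawn as straight lines. Bendlessness of $H_P^i$ together with \cref{thm:pappusthm} forces its anchors $p_7^i, p_8^i, p_9^i$ onto a common line $L_i$. Since $e_i$ is bendless and its representation passes through the two distinct points $\alpha(p_8^i)$ and $\alpha(p_9^i)$ on $L_i$, it follows that $\beta(e_i)=L_i$, and in particular $\alpha(p_7^i)\in\beta(e_i)$. For $x\geq 2$ the cyclic shift gives $i'=(i\bmod x)+1\neq i$, so $p_7^i\notin e_i$; this forbidden incidence contradicts the representation condition $v\in e\iff \alpha(v)\in\beta(e)$. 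Summing over $i$ yields $t\geq x$, the desired contradiction.

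The geometric content---Pappus forcing a collinearity that in turn produces an extra incidence---is identical to the previous lemma; the main things to verify are that the connecting hyperedges $e_1,\ldots,e_x$ and the hyperedges of the $x$ Pappus gadgets are all distinct (so bends can be summed disjointly), and that the cyclic shift $i\mapsto i'$ never fixes an index when $x\geq 2$. Both are immediate from the construction, so the principal obstacle is really just the careful bookkeeping over the $x$ copies rather than any new geometric argument.
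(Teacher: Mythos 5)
Your proof is correct and is exactly the intended generalization of the paper's $x=2$ lemma: each pair consisting of a Pappus gadget $H_P^i$ and its connecting hyperedge $e_i$ forces one bend via \cref{thm:pappusthm} and the forbidden incidence $\alpha(p_7^i)\in\beta(e_i)$, and the $x$ pairs partition the hyperedges so the bends add up (the paper itself only sketches this, deferring to the construction). Your bookkeeping, including the checks that $i'\neq i$ for $x\ge 2$ and that the $x=1$ case follows from $x=2$, fills in precisely what the paper leaves implicit.
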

Similar constructions exist (at least for $x=2$) \citep{DBLP:books/ph/Gruenbaum09}, the generalization has not been stated explicitly.

\section{Conclusions and open problems}
We studied representations of hypergraphs by arrangements of lines/segments. While we answered several complexity questions, many open problems remain:
\begin{itemize}
    \item What is the time complexity of deciding whether a hypergraph has a crossing-free line representation?
    \item We have almost  characterized the complexity of deciding whether a  max-degree-2 hypergraph has a segment representation (hardness for rank at least five and a polynomial algorithm for rank at most three). What about max-degree-2 rank-4 hypergraphs?
    \item As most straight-line variants are hard, it is a natural question to explore the space where curves can have bends.
\end{itemize}

\bibliographystyle{abbrvnat}
\bibliography{literaturearxiv}
\label{sec:biblio}

\end{document}